\newcommand{\ie}{i.\,e.,\xspace}
\newcommand{\hcal}{\ensuremath{\mathcal{H}}\xspace}
\newcommand{\rcal}{\ensuremath{\mathcal{R}}\xspace}
\newcommand{\gcal}{\ensuremath{\mathcal{G}}\xspace}
\newcommand{\ecal}{\ensuremath{\mathcal{E}}\xspace}
\newcommand{\tuple}[1]{\ensuremath{\langle {#1} \rangle}\xspace}
\newcommand{\lifetime}{\ensuremath{\tau}\xspace}
\newcommand{\reachG}[1]{\ensuremath{\mathcal{R}(#1)}\xspace}
\newcommand{\setting}[1]{{\normalfont\texttt{#1}\xspace}}
\newcommand{\noTransform}[3]{\setting{#1} $\not\rightsquigarrow^{#2}$ \setting{#3}\xspace}
\newcommand{\noTransformAddOn}[5]{\setting{#1} {#2} $\not\rightsquigarrow^{#3}$ \setting{#4} {#5}\xspace}
\newcommand{\yesTransform}[3]{\setting{#1} $\rightsquigarrow^{#2}$ \setting{#3}\xspace}
\newcounter{sharedquestioncounter} 
\newtheorem{openquestion}[sharedquestioncounter]{Open Question}
\newtheorem{question}[openquestion]{Question}
\title{Simple, Strict, Proper, and Directed: Comparing Reachability in Directed and Undirected Temporal Graphs} 
\titlerunning{Simple, Strict, Proper, and Directed: Reachability in (Un)Directed Temporal Graphs} 
    \author{Michelle {D\"oring}}{
        Hasso Plattner Institute, University of Potsdam, Potsdam, Germany
        \and \url{https://hpi.de/friedrich/people/michelle-doering.html} }{michelle.doering@hpi.de}{https://orcid.org/0000-0001-7737-3903}{German Federal Ministry for Education and Research (BMBF) through the project ``KI Servicezentrum Berlin Brandenburg'' (01IS22092)}
    \authorrunning{M. Döring} 
\keywords{Temporal graphs, Directed graphs, Temporal reachability, Dynamic Networks} 
\newcommand{\D}{D\xspace}
\newcommand{\UD}{UD\xspace}
\newif\ifshort
\begin{document}

\maketitle
    
    
\begin{abstract}
    We present the first comprehensive analysis of temporal settings for directed temporal graphs, fully resolving their hierarchy with respect to support, reachability, and induced-reachability equivalence. These notions, introduced by Casteigts, Corsini, and Sarkar [TCS24], capture different levels of equivalence between temporal graph classes.
    Their analysis focused on undirected graphs under three dimensions: \textit{strict vs. non-strict} (whether times along paths strictly increase), \textit{proper vs. arbitrary} (whether adjacent edges can appear simultaneously), and \textit{simple vs. multi-labeled} (whether an edge can appear multiple times). In this work, we extend their framework by adding the fundamental distinction of \textit{directed vs. undirected} edges.
    
    Our results reveal a single-strand hierarchy for directed graphs, with strict \& simple being the most expressive class and proper \& simple the least expressive. In contrast, undirected graphs form a two-strand hierarchy, with strict \& multi-labeled being the most expressive and proper \& simple the least expressive. The two strands are formed by the non-strict \& simple and the strict \& simple classes, which we show to be incomparable, thereby resolving the open question from [TCS24].
    
    In addition to examining the internal hierarchies of directed and of undirected graph classes, we compare the two directly.
    We show that each undirected class can be transformed into its directed counterpart under reachability equivalence, while no directed class can be transformed into any undirected one. Additionally, strict temporal graph classes, whether directed or undirected, are strictly more expressive than all non-strict classes.

    Our findings have significant implications for the study of computational problems on temporal graphs.
    Positive results in more expressive graph classes extend to all weaker classes as long as the problem is independent under reachability equivalence. Conversely, hardness results for a less expressive class propagate upward to all stronger classes. 
    We hope these findings will inspire a unified approach for analyzing temporal graphs under the different settings.

\end{abstract}

\section{Introduction}

A graph is called \textit{temporal} if its connections change over time.
Across disciplines, this concept has been given a myriad of names, including highly dynamic, edge-scheduled, multistage, time-labeled, and time-varying graph or network.
Unlike classical \textit{static} graphs, where every edge is always available, temporal graphs restrict the availability of each edge to specific points in time, making them ideal for modeling dynamic interactions and processes.

Real-world temporal networks contain either directed or undirected connections. Undirected connections commonly arise in social networks and peer-to-peer systems and can also model biological systems such as protein interaction networks and power grids. 
In contrast, directed connections are crucial for modeling one-way routes in transportation networks, logistic flows, and directed communication in email systems or online platforms. They also appear in ecological systems, such as predator-prey relationships. 
Notably, directed networks not only extend the complexity of undirected ones (replace each undirected edge with two opposing directed edges) but also introduce unique challenges, such as directed cycles.

In the context of this paper, a temporal graph is defined as $\gcal=(V,E,\lambda)$, with a finite set of vertices $V$, a set of edges $E\subseteq V\times V$, and a function $\lambda\colon E\rightarrow 2^\mathbb{N}$ assigning a set of time labels to every edge, indicating when the edge is present.
    Real-world applications have led to specialized temporal graph models, extending this basic framework by features, such as waiting times, dynamic edge capacities, and periodic or probabilistic edge appearances (refer to \cite{casteigts_time-varying_2012} for an overview).
    In this work, we focus on the basic temporal graph model.

Following the direction of Casteigt, Corsini, and Sakar (CCS) \cite{casteigts_simple_2024}, we investigate how simple changes in the definition of temporal graphs impact the reachability. 
A vertex can \textit{reach} another, if there exists a temporal path\,--\,a journey traversing edges in temporal order.
Reachability is a fundamental concept that has been the focus of most of the recent research on temporal graphs.
In particular the distinction between static reachability, which is transitive (if $x$ reaches $y$, and $y$ reaches $z$, then $x$ reaches $z$), and temporal reachability, which is not, has been widely studied and discussed \cite{casteigts_et_al2024distancetotransitivity, mertzios_complexity_2023}.
Reachability occurs in a variety of computational problems such as computing optimal journeys \cite{fuchsle_delay-robust_2022,kempe_connectivity_2002,xuan_computing_2002}, multi-agent pathfinding \cite{klobas_interference-free_2023,kunz_which_2023}, modeling infection or gossip spreading \cite{bumby1981problem,DELIGKAS2024105171,gobel_label-connected_1991}, finding connected components \cite{bhadra2003complexity,Costa2023OnCL}, computing flows or separators \cite{berman1996vulnerability,fluschnik_temporal_2020,vernet_theoretical_2021}, and designing temporal spanning graphs \cite{casteigts_search_2023,casteigts_sharp_2022}.

    Our work builds on the framework of temporal graph settings introduced by CCS, which was previously applied only to undirected graphs and identified three key dimensions within the \emph{settings} of a temporal graph:
    \begin{bracketenumerate}
        \item \textit{strict} vs \textit{non-strict} (whether times along a path are strictly increasing or non-decreasing);
        \item \textit{proper} vs \textit{arbitrary} (whether adjacent edges are forbidden to appear at the same time);
        \item \textit{simple} vs \textit{multi-labeled} (whether edges can appear at most once or multiple times).
    \end{bracketenumerate}
    We extend the framework by introducing a new dimension\,--\,one of the primary decisions that must be made when analyzing any graph:
    \begin{bracketenumerate}
        \setcounter{enumi}{-1}
        \item \textit{directed} vs \textit{undirected}.
    \end{bracketenumerate}
    This distinction is fundamental, as it directly shapes the definitions of graph properties and the formulation of computational problems, both in temporal and static graph theory.
    Throughout this paper, we refer to a setting and its corresponding class of temporal graphs by their defining properties in type-font; for instance, the directed, strict, and proper graphs are denoted by \setting{directed \& strict \& proper}.
    
    The different classes of temporal graphs are compared under four equivalence relations: bijective, support, reachability, and induced-reachability equivalence.
    These relations define different levels of comparability, based on the \textit{reachability graph} $\rcal(\gcal)$\,--\,a static graph with an edge from $u$ to $v$ if and only if there is a temporal path from $u$ to $v$ in \gcal.
    We particularly focus on reachability equivalence, which holds for two temporal graphs if and only if they share the same reachability graph (up to renaming the vertices).
\begin{figure}[t]
    \centering
    \makebox[\textwidth][c]{\includegraphics[width=1\textwidth]{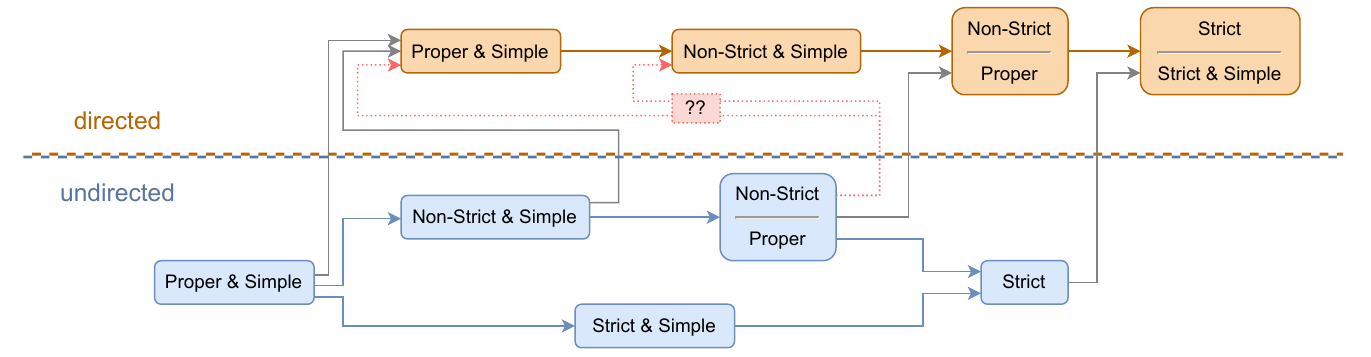}}
    \caption{Illustration of the reachability hierarchy for the temporal graph settings. An arrow from (1) to (2) indicates that every graph in (1) can be transformed into a graph in (2) with the same reachability graph. Indirect transformations are implied. 
    The two red dotted arrows 
    represent the open questions.
    The orange boxes (top) show the directed hierarchy (\Cref{sec:directed-reachability}) and the blue boxes (bottom) the undirected hierarchy (\Cref{sec:undirected-reachability}).
    The merged hierarchy is discussed in \Cref{sec:directed-VS-undirected}.}
    \label{fig:merged-hierarchy1}
\end{figure}

\paragraph*{Our Contribution} Our contribution is three-fold. First, we introduce and fully resolve the directed hierarchy. Second, we answer the open questions from \cite{casteigts_simple_2024}, thereby completing the undirected hierarchy. Third, we compare the directed and undirected graph classes to obtain a merged hierarchy.

\vspace{0.2em} We fully resolve the directed hierarchy with respect to the different equivalence notions, placing particular emphasis on reachability equivalence.
The top half (orange) of \Cref{fig:merged-hierarchy1} illustrates the directed reachability hierarchy.
We begin by identifying subset relations among the graph classes based on their definitions.
Next, we introduce four transformations establishing further relations:
    \begin{enumerate}
        \item A simple structural observation shows that \setting{directed \& strict} graphs and \setting{directed \& strict \& simple} graphs are equivalent under reachability equivalence.
        \item \textit{Support-dilation}, a generalization of the dilation process from \cite{casteigts_simple_2024}, transforms \setting{directed \& non-strict} into \setting{directed \& proper}, preserving both the reachability graph and the underlying paths (support equivalence).
        \item A novel transformation, \textit{reachability-dilation}, provides a more efficient way to transform \setting{directed \& non-strict} into \setting{directed \& proper} while preserving reachability.
        \item A generalization of the semaphore construction from \cite{casteigts_simple_2024} demonstrates that all directed settings can produce equivalent reachability graphs, provided we allow the introduction of auxiliary vertices during the transformation (induced-reachability equivalence).
    \end{enumerate}
For the remaining setting comparisons, we present separating structures\,--\,reachability graphs that can be expressed in one setting but not in another. These structures, primarily composed of directed cycles, clearly highlight the structural differences between the directed settings.

\vspace{0.2em} We complete the undirected hierarchy by answering the two open questions from \cite{casteigts_simple_2024}. Formally, we give an \setting{undirected \& non-strict} graph which cannot be transformed into an \setting{undirected \& strict} graph while preserving reachability. Thus, the undirected reachability hierarchy contains two incomparable strands, illustrated in the bottom half (blue) of \Cref{fig:merged-hierarchy1}.

\vspace{0.2em} We compare the directed and undirected graph classes, nearly completing the merged reachability hierarchy, which is illustrated by the arrows between top and bottom in \Cref{fig:merged-hierarchy1}.
First, we identify a straightforward transformation from any undirected setting to its directed counterpart (replace each undirected edge with two opposing directed edges), while no transformation exists in the reverse direction. 
Next, we establish a clear separation between strict and non-strict settings by presenting an \setting{undirected \& strict} (\setting{\& simple}) graph that cannot be transformed into any directed setting weaker than \setting{directed \& strict} while preserving reachabilities.
Finally, we show that reachability-dilation transforms any \setting{undirected \& non-strict \& simple} graph into the weakest directed setting, \setting{directed \& proper \& simple}, which is less expressive than its directed analogue.
This leaves two open questions: Can \setting{undirected \& non-strict} graphs be transformed into \setting{directed \& proper \& simple} or \setting{directed \& non-strict \& simple}, or neither?
\section{Preliminaries} \label{sec:prelims}

\subsection{Temporal Graphs}
    A \textit{temporal graph} $\gcal=(V,E,\lambda)$ consists of a static graph $G=(V,E)$, called the \textit{footprint}, along with a labeling function $\lambda$.
    A pair $(e, t)$, where $e \in E$ and $t \in \lambda(e)$, represents a \textit{temporal edge} with \textit{label} $t$. The set of all temporal edges is denoted by $\ecal$. The range of $\lambda$ is referred to as the \textit{lifetime} \lifetime, and the static graph $G_t = (V, E_t)$, where $E_t = \{e \in E \mid t \in \lambda(e)\}$, is called the \textit{snapshot} of $\gcal$ at time $t$.
    A \textit{temporal path} is a sequence of temporal edges $\tuple{(e_i,t_i)}$ where $\tuple{e_i}$ forms a path in the footprint, and the time labels $\tuple{t_i}$ are non-decreasing. 
    We denote a temporal path from $u$ to $v$ by $u \rightsquigarrow v$, omitting the term "temporal" for simplicity.
\subsection{Reachability and Equivalence Relations}
    The reachability relation between vertices of a temporal graph \gcal 
    can be captured by the \textit{reachability graph} $\rcal(\gcal)$ which is given by the static directed graph $(V,E_c)$ with $(u,v)\in E_c$ if and only if there exists a temporal path from $u$ to $v$. A graph is \textit{temporally connected} if all vertices can reach each other by at least one path, \ie \reachG{\gcal} is a complete directed graph.
    A \textit{temporally connected component} of \gcal is a subset of vertices which are temporally connected.


    We now define the equivalence relations between temporal graphs in relation to reachability. The definitions are adopted from \cite[Section 3]{casteigts_simple_2024}. 
    Let $\gcal_1=(V_1,E_1,\lambda_1)$ and $\gcal_2=(V_2,E_2,\lambda_2)$ be temporal graphs and denote by $\mathbb{P}(\gcal_1)$ and $\mathbb{P}(\gcal_2)$ the sets of temporal paths in $\gcal_1$ and $\gcal_2$, respectively. We say two paths $P$ and $P'$ share the same \emph{support} if they visit the same vertices in the same order, \ie their underlying static paths are the same.
    
    \begin{definition}[Bijective equivalence]
        Two temporal graphs $\gcal_1$ and $\gcal_2$ are \emph{bijective equivalent} if $V_1=V_2$ and there is a bijection $\sigma:\mathbb{P}(\gcal_1)\rightarrow\mathbb{P}(\gcal_2)$
        , such that each $P\in\mathbb{P}(\gcal_1)$ and $\sigma(P)\in\mathbb{P}(\gcal_2)$ share the same support.
    \end{definition}
    \begin{definition}[Support equivalence]
        Two temporal graphs $\gcal_1$ and $\gcal_2$ are \emph{support equivalent} if $V_1=V_2$ and 
        for every temporal path $P$ in either graph, there exists a temporal path $P'$ in the other graph, such that $P$ and $P'$ share the same support. 
    \end{definition}
    \begin{definition}[Reachability equivalence]
        Two temporal graphs $\gcal_1$ and $\gcal_2$ are \emph{reachability equivalent} if $V_1=V_2$ and $\rcal(\gcal_1)\simeq \rcal(\gcal_2)$, \ie the reachability graphs are isomorphic.
    \end{definition}
    By slight abuse of language, we will say that $\gcal_1$ and $\gcal_2$ have the same reachability graph whenever $\rcal(\gcal_1)\simeq \rcal(\gcal_2)$.
    \begin{definition}[induced-Reachability equivalence]
        A temporal graph $\gcal_2$ is \emph{induced-reachability equivalent} to $\gcal_1$ if $V_1\subseteq V_2$ and $\rcal(\gcal_1)\simeq \rcal(\gcal_2)[V_1]$. 
    \end{definition}
    The four equivalence notions are sorted by decreasing requirement strength: bijective implies support implies reachability, which in turn implies induced-reachability equivalence.

    In this paper, we analyze whether a class of temporal graphs $\mathbb{G}_1$ can be transformed into another class $\mathbb{G}_2$ while preserving one of the equivalence notions: bijective, support, reachability, or induced-reachability. Formally, such a \textit{transformation} exists if, for every $G_1 \in \mathbb{G}_1$, there is a $G_2 \in \mathbb{G}_2$ such that $G_1$ and $G_2$ are equivalent under the chosen notion.
    This is denoted by \yesTransform{$\mathbb{G}_1$}{X}{$\mathbb{G}_2$}, where $X$ specifies the equivalence type ($B$, $S$, $R$, or $iR$).
    Inversely, a \emph{separating structure} or \emph{separation} with respect to an equivalence notion is a graph $G\in\mathbb{G}_1$ for which there exists no graph in $\mathbb{G}_2$ that is equivalent under the same notion.
    
    By definition, a transformation under a given equivalence notion guarantees a transformation for all weaker notions.
    Similarly, a separating structure under a specific notion also separates the same classes under every stronger notion.
    
    \subsection{Settings: (Un)Directed, (Non-)Strict, Simple, and Proper} \label{subsec:settings}
    We identify four key dimensions for defining the settings of a temporal graph \gcal, following \cite{casteigts_simple_2024}:
    \begin{enumerate}
    \item \textit{Directed vs. Undirected}: The temporal graph \gcal is called directed or undirected based on whether its footprint is a directed or undirected graph.
   
    \item \textit{Strict vs. Non-strict}: A temporal path is said to be \textit{strict} if the sequence of time labels is strictly increasing. Otherwise, it is called \textit{non-strict}. When reachability is considered using (non-)strict paths, we say \gcal is (non-)strict. 

    \item \textit{Proper vs. Arbitrary}: A temporal graph is called \textit{proper} if no two edges incident to the same vertex share a time label. 
       
    \item \textit{Simple vs. Multi-labeled}: A temporal graph is called \textit{simple} (also referred to as single-labeled) if each edge is assigned exactly one time label. If an edge can have multiple time labels, \gcal is \textit{multi-labeled}, which is typically the default setting.   
    \end{enumerate}
    A temporal graph \gcal is called \textit{happy} \cite{casteigts_simple_2024} if it is both proper and simple.

    We denote the class of temporal graphs belonging to, for instance, the directed, strict and simple setting, as \setting{directed \& strict \& simple}. By slight abuse of notation, we may refer to \gcal itself as a \setting{directed \& strict \& simple} graph rather than explicitly stating that \gcal belongs to the class of graphs defined by this setting.

    To further enhance readability, we will abbreviate any directed setting as \setting{\D \&} \dots\ and any undirected setting as \setting{\UD \&} \dots\ in the remainder of this paper.


    Observe that the settings \setting{strict \& proper} and \setting{non-strict \& proper} are equivalent to \setting{proper}: In any proper temporal graph, each pair of edges that could be taken in succession must have distinct time labels. Therefore, every temporal path has strictly increasing time labels, making the distinction between strict and non-strict paths irrelevant.
    Thus, we will omit the ``(non-)strict'' when referring to a \setting{proper} setting.

\begin{figure}[h]
    \centering
        \begin{tabular}{@{\hspace{-0.5em}}m{0.46\linewidth}@{\hspace{2em}}m{0.41\linewidth}@{}}
        \includegraphics[width=\linewidth]{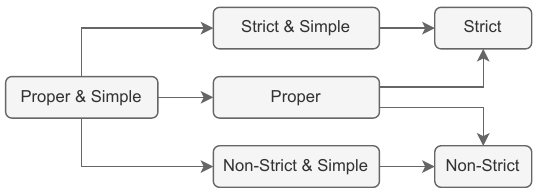}& 
        \includegraphics[width=\linewidth]{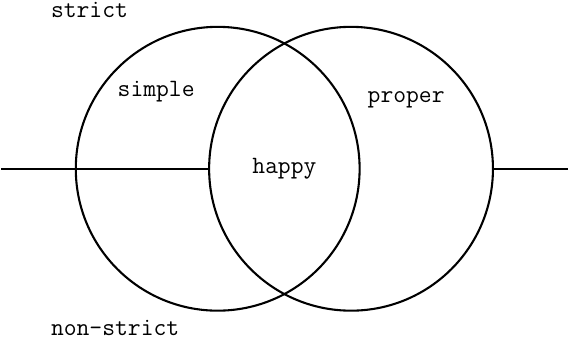}
        \end{tabular}
    \caption{On the left is an illustration of the subset relations between the settings. An arrow from (1) to (2) represents that (1) is a subset of (2). This holds for both directed and undirected graphs.
    On the right is an  illustration adopted from CCS, showing the subset relation in a venn diagram.}
    \label{fig:temporal-settingstwo}
\end{figure}
    In total there are 12 settings to consider, 6 for directed and 6 for undirected graphs. To analyze how these settings relate to one another with respect to the introduced equivalence notions, we first observe the subset relations implied by the setting definitions.
    If a setting is a subset of another\,—\,for example, \setting{\D \& proper} is a subset of \setting{\D \& non-strict}\,—\,there exists a trivial bijective-preserving transformation from \setting{\D \& proper} to \setting{\D \& non-strict} (the identity function).
    In \Cref{fig:temporal-settingstwo}, we visualize these subset relations by adapting \cite[Figure 1]{casteigts_simple_2024} on the right, and providing the hierarchy given by the subset relations on the left.
\section{Undirected Hierarchy -- Summary and Answer to Question 2 of \cite{casteigts_simple_2024}} \label{sec:undirected-reachability}
\begin{figure}[h]
    \centering
    \includegraphics[width=1\linewidth]{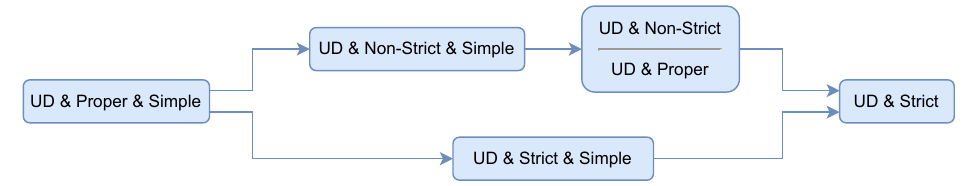}
    \caption{The reachability hierarchy of the undirected graph settings. The undirected hierarchy under support equivalence is the same. With respect to induced-reachability equivalence, all undirected settings are equivalent.}
    \label{fig:hierarchy-undirected}
\end{figure}
In this section, we give a brief overview of the results of CCS for undirected temporal graphs and address the last Open Question posed in \cite{casteigts_simple_2024}.
CCS identified three equivalence-preserving transformations for different equivalence notions, alongside the subset transformations. 

\emph{Dilation} transforms any \setting{\UD \& non-strict} graph into a support equivalent \setting{\UD \& proper} graph.
The process considers each snapshot individually. It assigns unique time labels to each edge within the snapshot by stretching the time step, duplicating each individual path occurring in that time step, and shifting all subsequent time labels accordingly.

\emph{Saturation} transforms any \setting{\UD \& non-strict} graph into a reachability equivalent \setting{\UD \& strict} graph. Similar to dilation, saturation considers each snapshot individually.
However, instead of preserving every individual path, it creates a direct edge with time label $t$ for every reachability achieved in time step $t$. Since \setting{\UD \& proper} is a subset of \setting{\UD \& strict}, the existence of a reachability equivalent transformation is already guaranteed by dilation. Nevertheless, saturation has the advantage of a significantly lower blow-up of the graph's lifetime, albeit at the cost of potentially producing a non-proper graph.

\emph{Semaphore} transforms any \setting{\UD \& strict} graph into an induced-reachability equivalent graph in the most restrictive \setting{\UD \& proper \& simple} setting.
To make every edge single-labeled, semaphore subdivides each temporal edge $(uv,t)$ by replacing it with $(ux_t,t)$ and $(x_tv,t)$, where $x_t$ is unique for each temporal edge.
To make the graph proper, semaphore processes each snapshot individually. Within a snapshot, every subdivided edge is duplicated, and the labels shifted so that one copy represents the direction $u$ to $v$, and the other represents $v$ to $u$. For further details on this transformation, refer to \Cref{def:semaphore}.
Since this process introduces new vertices, it can at most achieve induced-reachability equivalence. However, apart from the new vertices, the original paths are preserved, and in this sense, semaphore can be considered as induced-bijective-preserving.

Since \setting{\UD \& proper \& simple} is a subset of all other settings, and since dilation and semaphore allow all setting to be transformed into this setting under induced-reachability, we conclude that all undirected settings are equivalent with respect to induced-reachability.

For all other setting comparisons, apart from Open Question 1 and 2, CCS proved the existence of separating structures.
These structures are mostly composed of short paths and small triangles, and generally contain a small number of vertices.
%
Open Question 2 from \cite{casteigts_simple_2024} asked whether \setting{\UD \& non-strict \& simple} can be transformed into \setting{\UD \& strict \& simple}. In \Cref{thm:openQ}, we show that no such transformation exists. This also resolves Open Question 1 from the same work, which asked the equivalent question for \setting{\UD \& non-strict}.

Combining the results of \cite{casteigts_simple_2024} with \Cref{thm:openQ}, we obtain a two-strand hierarchy for undirected reachability, which is illustrated in \Cref{fig:hierarchy-undirected}.
\newcommand{\UDstatement}{UD\xspace}
\begin{theorem}[\noTransform{\UDstatement \& non-strict \& simple}{R}{\UDstatement \& strict \& simple}]\label{thm:openQ}
There is a graph in the \setting{\UDstatement \& non-strict \& simple} setting whose reachability graph cannot be obtained from a graph in the \setting{\UDstatement \& simple \& strict} setting.
\end{theorem}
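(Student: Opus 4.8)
The plan is constructive. I will exhibit one explicit temporal graph $\gcal$ in the \setting{\UD \& non-strict \& simple} setting, describe its reachability graph $D:=\rcal(\gcal)$ exactly, and then prove that no temporal graph in the \setting{\UDstatement \& simple \& strict} setting has reachability graph isomorphic to $D$. The single structural fact I rely on throughout is elementary: in an \emph{undirected} temporal graph, every footprint edge $uv$ (at whatever label) already yields the one-edge temporal paths $u\rightsquigarrow v$ and $v\rightsquigarrow u$, hence contributes both arcs $(u,v)$ and $(v,u)$ to the reachability graph. Writing $D^{\leftrightarrow}$ for the undirected graph whose edges are exactly the pairs $\{u,v\}$ with $(u,v),(v,u)\in D$ (the ``bidirectional part'' of $D$), it follows that the footprint of \emph{any} temporal graph realising $D$ is a subgraph of $D^{\leftrightarrow}$, and must be connected on each weakly connected component of $D$. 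Thus the question becomes purely combinatorial: can strict labels on some connected spanning subgraph of $D^{\leftrightarrow}$ produce exactly the arc set $D$?

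For the construction, the guiding idea is to let $\gcal$ realise ``for free'' a reachability pattern that is necessarily ``expensive'' for strict semantics. I take $\gcal$ to consist of a small connected ``core'' placed in a single snapshot at some time $t$ — for instance a short path or a triangle — so that all core vertices become pairwise bidirectionally reachable through that one snapshot, together with a few pendant edges placed at other times. The pendant edges are chosen so that (a) they introduce several prescribed one-directional arcs of $D$ whose temporal routes pass \emph{through} the core, and (b) they forbid certain ``shortcut'' arcs of $D$. The design target is that in any hypothetical strict realisation these one-directional arcs force a cyclic chain of strict inequalities $\lambda(e_1)<\lambda(e_2)<\dots<\lambda(e_m)<\lambda(e_1)$ among core edges — equivalently, they force some bidirectional pair of $D$ to be realised only by a temporal detour of length $\ge 2$, which is impossible under strict semantics — while $\gcal$ itself sidesteps exactly this by putting every core edge at time $t$, turning each ``$<$'' into a harmless ``$\le$''.

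The argument that no \setting{\UDstatement \& simple \& strict} graph $\hcal$ realises $D$ then proceeds in three steps. First, by the structural fact, the footprint $H$ of $\hcal$ is a connected spanning subgraph of $D^{\leftrightarrow}$. Second, I pin $H$ down: some edges of $D^{\leftrightarrow}$ must belong to $H$ because the bidirectional pair they represent admits no valid strict detour inside $D^{\leftrightarrow}$ (a length-$\ge 2$ detour would have to be simultaneously strictly increasing and strictly decreasing), while the inclusion of certain other edges is incompatible with the required \emph{absence} of some arc of $D$ under every labelling. Third, with $H$ determined, I trace the unavoidable temporal routes of the one-directional arcs of $D$ through $H$ and read off the cyclic inequality on the core labels, contradicting strictness. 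The complementary verification that $\gcal$ does realise $D$ is a finite reachability computation in $\gcal$.

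The step I expect to be the main obstacle is reconciling parts (a) and (b) of the construction. A single ``through the core'' one-directional arc only forces one inequality, and several such arcs can typically be ordered consistently (simple star-, path-, and triangle-based attempts all turn out to be strictly realisable); genuinely forcing a \emph{cyclic} constraint requires enough mutually entangled one-directional arcs, yet $D^{\leftrightarrow}$ must remain restricted enough that a strict realisation cannot evade the constraint by rerouting through an alternative detour, and the whole example must stay small enough to analyse by hand. Ruling out every alternative temporal route in $H$, not merely the intended one, is where the bulk of the case analysis lies. Finally, I note that the same $\gcal$ is expected to answer Open Question~1 of~\cite{casteigts_simple_2024} as well, since the obstruction is about strict semantics rather than single-labelledness, so $D$ should also fail to be realised by any \setting{\UDstatement \& strict} graph.
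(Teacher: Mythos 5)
Your proposal is a strategy outline, not a proof: the theorem's entire mathematical content is the existence of a specific separating structure, and you never exhibit one. You correctly identify the right framework (the footprint of any undirected realisation must be a subgraph of the bidirectional part of the reachability graph, edges forced in or out of the footprint, then contradictory strict orderings along forced routes), and this is indeed the shape of the paper's argument. But you then concede that every small candidate you tried (star-, path-, triangle-based cores) is strictly realisable, and that ``genuinely forcing a cyclic constraint'' while simultaneously blocking all alternative footprints and reroutings is the open obstacle. That obstacle is exactly the theorem. The paper's witness is a rather large, asymmetric gadget: a core $b$--$a$--$c$, $b$--$c$ whose two arcs $ba$ and $ca$ carry the \emph{same} label in the non-strict graph, plus eight pendant ``legs'' ($\ell_1,r_1,\ell_3,r_3,\ell_4,r_4,\ell_6,r_6$) whose one-directional reachabilities pin down the entire chronological order of the labels and force $\lambda(ba)<\lambda(ca)$ and $\lambda(ca)<\lambda(ba)$ simultaneously in any strict simple realisation; a separate case analysis (via the analogue of your step two, using the fact that in a simple graph two footprint-distance-two vertices must have one reaching the other) shows that the only admissible alternative footprints are isomorphic to the original one, so the labelling contradiction cannot be evaded by rerouting. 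Without such a concrete graph and the accompanying finite case analysis, your argument does not establish the statement.

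Two smaller points. First, your parenthetical claim that a bidirectional pair realised only by a detour of length at least two is ``impossible under strict semantics'' is false as stated: the two directions may use different detours, or the same intermediate vertex with different label patterns; it only becomes a contradiction when you have already argued that a \emph{single} two-edge path must serve both directions, which requires exactly the kind of footprint-pinning analysis you defer. Second, your closing remark that the same example should resolve Open Question~1 of \cite{casteigts_simple_2024} is consistent with the paper, but it likewise presupposes the missing construction; note that the paper obtains the multi-labeled strict case from the simple strict case only because the separating graph and argument are explicit.
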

\ifshort
\else
\begin{proof}
\setcounter{ppclaim}{0} 
    Consider the following temporal graph \gcal in the \setting{\UD \& non-strict \& simple} setting (left) and the corresponding reachability graph (right).
    \begin{figure}[h]
        \centering
        \includegraphics[width=0.9\linewidth]{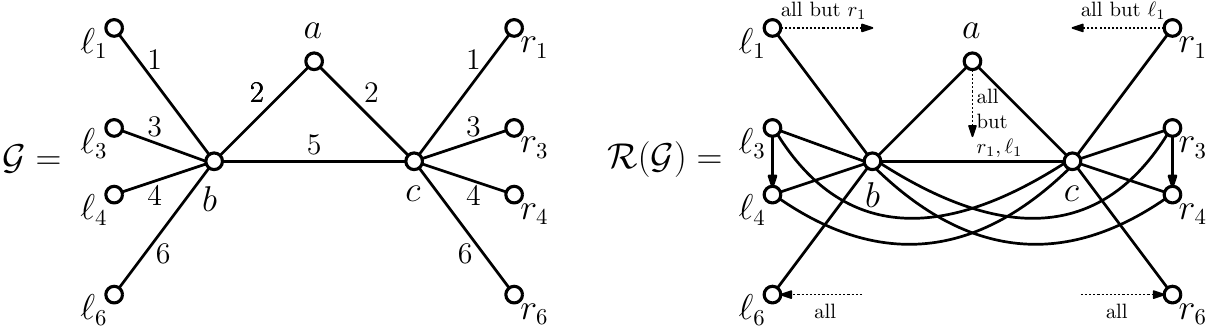}
        \label{fig:crab}
    \end{figure}
    For readabilities sake, the dotted edges adjacent to $\ell_1, r_1, a, \ell_6, r_6$ indicate incoming or outgoing edges that either connect to all vertices, or connect to all vertices except those explicitly excluded (specified next to the edge).
    For the sake of contradiction, let \hcal be a temporal graph in the \setting{\UD \& strict \& simple} setting whose reachability graph is isomorphic to that of \gcal.
    
    First, observe that there are only four undirected edges apart from the original edges of \gcal in the reachability graph.
    \begin{figure}[h]
        \centering
        \includegraphics[width=0.4\linewidth]{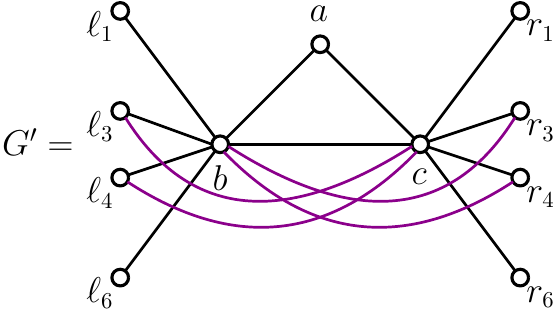}
        \label{fig:crab-additional-edges}
    \end{figure}
    That means, that the footprint of \hcal has to be isomorphic to a subgraph of the following graph $G'$, with the four edges highlighted in purple.
    We first prove an intermediate statement regarding the purple edges. 
    \begin{ppclaim}\label{claim:crab-purple-black-mutual-exclusion}
        The purple edge $c\ell_3$ can be in the footprint of \hcal if and only if no black edge from $\{cr_3,cr_4\}$ is in the footprint. The same holds for $c\ell_4$ and $\{cr_3,cr_4\}$, for $br_3$ and $\{b\ell_3,b\ell_4\}$, and for $br_4$ and $\{b\ell_3,b\ell_4\}$. 
    \end{ppclaim}
    \begin{claimproof}[Proof of \Cref{claim:crab-purple-black-mutual-exclusion}]
        We proof the claim for $c\ell_3$ and $\{cr_3,cr_4\}$; the other three cases follow by a symmetric argument.
        Assume $c\ell_3$ and $cr_3$ are in the footprint of \hcal. Then $\ell_3$ and $r_3$ have distance two in the footprint and by \cite[Lemma 1]{casteigts_simple_2024}, either of them must reach the other. But since $\ell_3r_3\notin E(\reachG{\gcal})$ and $r_3\ell_3\notin E(\reachG{\gcal})$, this yields a contradiction. The same argument holds for $cr_4$ in the footprint of \hcal, by \cite[Lemma 1]{casteigts_simple_2024} and $\ell_3r_4,r_4\ell_3\notin E(\reachG{\gcal})$.
    \end{claimproof}
    This leaves two cases for the footprint of \hcal: either we add no purple edge and $H$ is isomorphic to $G$, or we add at least one purple edge and delete the corresponding black edges.
    For the former, we show that there is no simple labeling yielding the desired reachabilities for \hcal.
    \begin{ppclaim} \label{claim:crab-caseone-no-purple}
        If the footprint of \hcal is isomorphic to the footprint of \gcal, then there can be no simple labeling achieving the same reachabilities in the strict setting.
    \end{ppclaim}
    \begin{claimproof}[Proof of \Cref{claim:crab-caseone-no-purple}]
        We show the claim by proving that the chronological order of the edges, \ie corresponding time labels, cannot change without changing the reachability in the strict setting. Since \hcal is simple, we also cannot add labels. The chronological order of the edges in $\mathcal{G}$ can be illustrated by the following partially ordered set. 
        \begin{figure}[ht]
            \centering
            \includegraphics[width=0.6\linewidth]{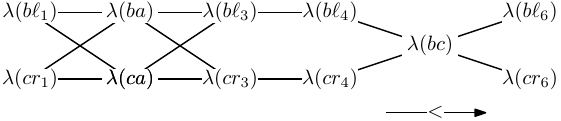}
        \vspace{-0.8em}\end{figure} 
    
        Now, since only $\ell_1$, $r_1$, $b$, $c$ can reach $a$, we infer that $\lambda(ba)$ and $\lambda(ca)$ must be strictly smaller than $\lambda(b\ell_3),\lambda(cr_3),\lambda(b\ell_4), \lambda(cr_4), \lambda(b\ell_6), \lambda(cr_6)$, but do not have to be the same.
        Furthermore, since $\ell_3$ can reach $\ell_4$ but not vice versa, we have $\lambda(b\ell_3)<\lambda(b\ell_4)$. Analogously, we have $\lambda(cr_3)<\lambda(cr_4)$.
        Now, for $r_3,r_4$ to reach $b$, and $\ell_3,\ell_4$ to reach $c$,  we need $\lambda(b\ell_4)<\lambda(bc)$ and $\lambda(cr_4)<\lambda(bc)$. 
        Lastly, since everyone can reach $\ell_6$ and $r_6$, we know that $\lambda(b\ell_6)$ and $\lambda(cr_6)$ must be greater than all other labels.
        Given all this, the only way for $\ell_1$ to reach everyone but $r_1$, and vice versa, is by making $\lambda(b\ell_1)$ and $\lambda(cr_1)$ strictly smaller than all other labels. Without the analysis about the ordering of $bc$ before, there would also be other ways for achieving the reachabilities of $\ell_1$ and $r_1$.

        This yields exactly the same chronological order for \hcal as it was the case for \gcal.    
        Now, since we cannot change the relative order of the time labels,  $\lambda(ba)$ or $\lambda(ca)$ has to change to enable the reachabilities of $\ell_1$ and $r_1$ in the strict setting. However, if $\lambda(ba)<\lambda(ca)$ then $r_1$ cannot reach $\ell_3$ and $\ell_4$, and if $\lambda(ba)>\lambda(ca)$ then $\ell_1$ cannot reach $r_3$ and $r_4$. Therefore, such a labeling is not possible in the \setting{\UD \& strict \& simple} setting.
    \end{claimproof}
    Therefore, $\reachG{\gcal}$ is not achievable for \hcal if its footprint is isomorphic to $G$.
    So, assume that at least one purple edge is added, without loss of generality, $c\ell_3$, and the corresponding black edges $cr_3$ and $cr_4$ are removed.
    To preserve the reachabilities of $r_3$ and $r_4$, they must remain connected to the graph, which can only be done by adding $br_3$ and $br_4$. This, however, implies that we need to remove the corresponding black edges $b\ell_3$ and $c\ell_4$.
    As a result, $\ell_4$ becomes disconnected, and we are forced to add $c\ell_4$.
    Thus, adding even one purple edge to the footprint requires adding all four purple edges.
    However, adding all purple edges and deleting the corresponding black edges results in a footprint that is isomorphic to $G$, with only $\ell_3$ and $\ell_4$ swapping places with $r_3$ and $r_4$, respectively. Thus, by \Cref{claim:crab-caseone-no-purple}, there is no simple labeling for \hcal yielding the desired reachability graph.
\end{proof}
\fi
\section{Directed Reachability} \label{sec:directed-reachability}
\begin{figure}[ht]
    \centering
    \includegraphics[width=\linewidth]{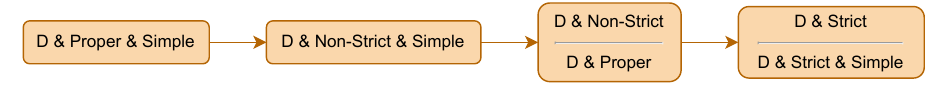}
    \caption{The reachability hierarchy of the directed graph settings. The directed hierarchy under support equivalence is different; it is the same hierarchy as in \Cref{fig:hierarchy-undirected}. With respect to induced-reachability equivalence, all directed settings are equivalent.}
    \label{fig:directed-hierarchy}
\end{figure}
In this section, we present our results on the equivalence hierarchy of directed temporal graphs.
While there are similarities to the results for undirected graphs\,—\,for example in the approaches for transformations\,—\,directed temporal graphs also have distinct differences in their separating structures and the resulting hierarchies.


For support equivalence, a generalization of \emph{Dilation} achieves \yesTransform{\D \& non-strict}{S}{\mbox{\D \& proper}}.
For all other setting comparisons under support equivalence, there exist separating structures involving one or more overlapping directed triangles. Therefore, under support equivalence, the directed settings have the same hierarchy as the undirected settings.

For reachability equivalence, there exists a transformation which introduces a notable contrast to the hierarchy of undirected graphs: \textbf{every} directed setting can be transformed into the \setting{strict \& simple} setting.
We also present a transformation similar to dilation which yields a reachability-preserving transformation from \setting{\D \& non-strict} to \setting{\D \& proper}. This process constructs a graph with at most twice as many labels, which is significantly smaller than the blow-up of dilation on undirected graphs.
On the negative side, the directed triangle still separates the \setting{strict} setting from the \setting{proper} and from the \setting{non-strict} settings.
Furthermore, the separation from \setting{proper} to \setting{non-strict \& simple} still holds.
As a result, the reachability hierarchy for directed graphs is as follows. 

Lastly, induced-reachability equivalence permits a directed analogue of the \emph{Semaphore} construction, which transforms the \setting{\D \& strict} setting to the \setting{\D \& proper \& simple} setting.
By combining dilation and semaphore, we can transform \textbf{any} temporal graph into an induced-reachability equivalent \setting{\D \& proper \& simple} graph.
Consequently, all directed settings are induced-reachability equivalent, as is the case for undirected graphs.

Before formally presenting our results, we establish the following two fundamental lemmas: 
the first concerns the non-strict reachability of a directed triangle, and the second addresses the labeling of a proper, fully connected graph.
\begin{lemma}\label{lem:nonstrict-triangles}
    Cycles in \setting{\D \& non-strict} temporal graphs always contain at least one transitive reachability. In particular, the reachability graph cannot contain an induced cycle. 
\end{lemma}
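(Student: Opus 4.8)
The plan is to argue that any directed cycle in a \setting{\D \& non-strict} temporal graph must contain a "repeated" or "non-increasing" transition, which forces a shortcut. Let $C = v_0 v_1 \cdots v_{k-1} v_0$ be a directed cycle in the footprint (with $k \geq 2$, counting a pair of anti-parallel edges as a $2$-cycle), and for each edge $v_i v_{i+1}$ pick a label $t_i \in \lambda(v_i v_{i+1})$; indices are taken modulo $k$. The key observation is that the cyclic sequence $t_0, t_1, \ldots, t_{k-1}$ cannot be strictly increasing when read cyclically: going all the way around we would need $t_0 < t_1 < \cdots < t_{k-1} < t_0$, a contradiction. Hence there is some index $i$ with $t_i \geq t_{i+1}$.

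First I would use this to exhibit a transitive reachability. If $t_i \geq t_{i+1}$, then in the non-strict setting the two temporal edges $(v_i v_{i+1}, t_i)$ and $(v_{i+1} v_{i+2}, t_{i+1})$ do \emph{not} form a valid temporal path from $v_i$ to $v_{i+2}$ in that direction — but I should instead look at the edges whose labels are non-decreasing. More carefully: pick $i$ to be the index where $t_i$ is \emph{maximum} over the cycle. Then $t_{i+1} \leq t_i$, so starting at $v_{i+1}$ and walking forward $v_{i+1} \to v_{i+2} \to \cdots \to v_i$ we traverse edges with labels $t_{i+1}, t_{i+2}, \ldots, t_{i-1}$, and then the closing edge $v_i \to v_{i+1}$ has label $t_i$, the maximum. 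So actually the right framing is: choose the edge $v_j v_{j+1}$ with the \emph{largest} label $t_j$; then the path going around the rest of the cycle, $v_{j+1} \to v_{j+2} \to \cdots \to v_j$, uses labels all $\le t_j$, but that doesn't immediately give monotonicity either. The clean argument is: take a minimal-length directed cycle and let $v_j v_{j+1}$ realize the minimum label $t_j$ on it; then $v_{j-1} \to v_j$ has label $t_{j-1} \geq t_j$, so... I will instead simply invoke the pigeonhole form below, which is robust.

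The robust argument: since $t_0, \ldots, t_{k-1}$ is not cyclically strictly increasing, there exist two consecutive edges $v_i v_{i+1}$ and $v_{i+1} v_{i+2}$ with $t_i \le t_{i+1}$ somewhere along the cycle (if every consecutive pair had $t_i > t_{i+1}$ we would get $t_0 > t_1 > \cdots > t_{k-1} > t_0$). For that index $i$, the two temporal edges form a non-strict temporal path $v_i \rightsquigarrow v_{i+2}$, so $(v_i, v_{i+2}) \in E(\reachG{\gcal})$. Since also $(v_i, v_{i+1})$ and $(v_{i+1}, v_{i+2})$ are edges of $\reachG{\gcal}$ (each is a length-one temporal path), the edge $v_i \to v_{i+2}$ is a transitive/chord reachability with respect to the cycle, so $\reachG{\gcal}$ restricted to $\{v_0,\dots,v_{k-1}\}$ is not an induced cycle. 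To conclude the "in particular" statement: an induced directed cycle in $\reachG{\gcal}$ of length $\ell \ge 3$ would require, for each of its vertices, a temporal path between the appropriate endpoints realizing exactly those arcs and no chords; applying the previous paragraph to any footprint cycle underlying such a configuration produces a chord, contradicting inducedness.

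The main obstacle I anticipate is the bookkeeping around the case $k=2$ (anti-parallel edges $uv$, $vu$): here "the cycle contains a transitive reachability" should be read as: one of $u \rightsquigarrow u$ or $v \rightsquigarrow v$ holds (a self-loop closure), which indeed follows since $\min(t_0,t_1)$-then-$\max(t_0,t_1)$ is non-decreasing. A secondary subtlety is making precise what "transitive reachability" means and checking it genuinely obstructs an induced cycle in $\reachG{\gcal}$ — i.e. translating a footprint cycle with a chord into the nonexistence of an induced cycle in the reachability graph, which needs the observation that every footprint edge is itself a reachability arc. Neither step is deep, but both require care with indices modulo $k$ and with degenerate small cases.
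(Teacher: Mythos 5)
Your argument for the first sentence is correct and is essentially the paper's own argument in contrapositive form: you observe that the labels along a footprint cycle cannot strictly decrease at every consecutive pair, so some pair satisfies $t_i \le t_{i+1}$ and yields a non-strict path $v_i \rightsquigarrow v_{i+2}$, i.e.\ a chord among the cycle's vertices; the paper instead assumes chordlessness and derives the impossible chain $\lambda(v_1,v_2)>\lambda(v_2,v_3)>\dots>\lambda(v_n,v_1)>\lambda(v_1,v_2)$. (Your phrase ``not cyclically strictly increasing'' should read ``strictly decreasing'', but the parenthetical justification you give is the right one, and the abandoned max/min digression is harmless.)

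The genuine gap is in your derivation of the ``in particular'' clause. An induced cycle in $\reachG{\gcal}$ on vertices $u_1,\dots,u_\ell$ only guarantees that each arc $u_i\to u_{i+1}$ is realized by \emph{some} temporal path, and these paths may run through vertices outside $\{u_1,\dots,u_\ell\}$; concatenating them gives a closed walk, and the chord produced by your first part on a footprint cycle inside that walk may join two auxiliary vertices, which does not contradict inducedness of the cycle on the $u_i$. This cannot be patched by more careful bookkeeping, because the unrestricted claim fails once auxiliary vertices are allowed: take the directed $6$-cycle $u_1\to a\to u_2\to b\to u_3\to c\to u_1$ with labels $1,3,1,3,1,3$; under non-strict reachability, the subgraph of $\reachG{\gcal}$ induced by $\{u_1,u_2,u_3\}$ is exactly the chordless triangle $u_1\to u_2\to u_3\to u_1$. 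The paper's proof makes the same leap when it asserts that the induced footprint on the cycle vertices must itself be a cycle; that assertion (and hence the conclusion) is valid precisely when the arcs of the induced cycle must be realized by paths staying inside its vertex set, which is the situation in all of the paper's applications: either the cycle is literally present in the footprint, or reachability equivalence fixes the vertex set, as for the $3$-vertex triangle in \Cref{lem:strict-noR-nonstrict}. So you should either state that hypothesis explicitly or phrase the second claim for cycles of the footprint, rather than for arbitrary induced cycles of the reachability graph.
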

\begin{proof}
    Assume towards contradiction that there is a temporal graph $\gcal=(G,\lambda)$ whose reachability graph in the \setting{\D \& non-strict} setting contains an induced cycle on the vertices $v_1,\dots,v_n$, indexed in the order of the cycle. Then the induced footprint $G[v_1,\dots,v_n]$ has to be a cycle as well.
    We can conclude that $\lambda(v_1,v_2)>\lambda(v_2,v_3)$, as otherwise $v_1$ can reach $v_3$. Following this argument, we get $\lambda(v_1,v_2)>\dots>\lambda(v_{n-1},v_n)>\lambda(v_n,v_1)>\lambda(v_1,v_2)$, which is a contradiction. If we add more than one label to any edge, we only increase the reachabilities. Therefore, there can  be no such temporal graph \gcal.
\end{proof}
Note that in a \setting{\D \& strict} graph this can be easily achieved by taking the cycle as the footprint and assigning a uniform label, say $1$, to every edge, see \Cref{fig:directed-triangle}.
\begin{lemma}\label{lem:proper-cliques}
    Any \setting{\D \& proper} temporal graph on $n>2$ vertices whose reachability graph is a clique (all vertices are pairwise connected) has to contain at least $n+1$ temporal edges.
\end{lemma}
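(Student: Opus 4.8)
The plan is to show that $n$ temporal edges never suffice, so at least $n+1$ are needed; the argument specializes the cyclic-inequality idea of \Cref{lem:nonstrict-triangles} to the footprint. As a first step, since $\reachG{\gcal}$ is a clique, for every ordered pair of distinct vertices $u,v$ there is a temporal path $u\rightsquigarrow v$, whose underlying walk is in particular a directed path from $u$ to $v$ in the footprint $G$. Hence $G$ is strongly connected, so every vertex has in- and out-degree at least one, $G$ has at least $n$ arcs, and $\gcal$ has at least $n$ temporal edges.

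Assume for contradiction that $\gcal$ has exactly $n$ temporal edges. Then $G$ has exactly $n$ arcs, each with exactly one label, and $G$ is loopless: a loop at a vertex $v$ together with the genuine out-arc that strong connectivity forces at $v$ would give $v$ out-degree at least two, hence more than $n$ arcs in total. A loopless strongly connected digraph on $n$ vertices with exactly $n$ arcs has all in- and out-degrees equal to one, so it is a single directed cycle through all vertices; relabel so that $G$ is $v_1\to v_2\to\dots\to v_n\to v_1$ with indices taken modulo $n$, and write $\lambda_i:=\lambda(v_iv_{i+1})$.

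Now pick $j$ so that $\lambda_j$ is maximum. Any temporal path starting at $v_j$ must begin with its unique out-arc $(v_j,v_{j+1})$ at time $\lambda_j$; to extend beyond $v_{j+1}$ it would have to continue along the unique out-arc $(v_{j+1},v_{j+2})$ at a time at least $\lambda_j$, forcing $\lambda_{j+1}=\lambda_j$. But $(v_jv_{j+1})$ and $(v_{j+1}v_{j+2})$ are both incident to $v_{j+1}$, so properness forbids this. Hence no temporal path leaves $v_j$ after a single step, \ie $v_j$ reaches only $v_{j+1}$. Since $n>2$, the vertex $v_{j+2}$ differs from both $v_j$ and $v_{j+1}$, so $v_j\not\rightsquigarrow v_{j+2}$, contradicting that $\reachG{\gcal}$ is a clique.

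I expect the only genuine obstacle to be the bookkeeping in the second step: carefully arguing that an exact count of $n$ temporal edges forces the footprint to be a directed Hamiltonian cycle, ruling out loops and securing a single label per arc. One subtlety is that, a priori, $v_j$ might reach $v_{j+2}$ by a temporal walk wrapping once around the cycle; the maximum-label argument handles this automatically, since it shows that \emph{no} temporal path of length at least two starts at $v_j$. Everything else is routine. Finally, $n>2$ is used exactly once, to ensure $v_{j+2}$ is a new vertex, mirroring the fact that the claim fails for $n=2$, where a $2$-cycle with two labels already realizes the complete reachability graph on two vertices.
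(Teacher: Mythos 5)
Your proof is correct and follows essentially the same route as the paper's: strong connectivity forces at least $n$ arcs, exactly $n$ temporal edges forces the footprint to be a directed Hamiltonian cycle with one label per arc, and properness then makes full temporal connectivity impossible. Your maximum-label argument is a clean, explicit way of carrying out the final contradiction that the paper only asserts, and your handling of loops and the single-label bookkeeping is sound.
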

\begin{proof}
    Assume towards contradiction that there is a \setting{\D \& non-strict} temporal graph $\gcal=(V,E,\lambda)$ whose reachability graph is a clique but with $\lvert\ecal\rvert\leq n$.
    
    First, observe that there have to be at least $n$ static edges in \gcal, as otherwise the footprint of \gcal could not be strongly connected \cite{garcia2012minimal}, and \gcal could not be temporally fully connected. Thus, assume $\lvert\ecal\rvert=n$.
    For the footprint of \gcal to be strongly connected with $n$ edges, it must form a directed cycle.
    Let $v_1,\dots,v_n$ be the vertices indexed in the order of the cycle.

    Since \(\gcal\) is proper, every temporal path must use edges with strictly increasing time labels. To achieve temporal connectivity, vertex \(v_1\) must be able to reach all other vertices \(v_2, \dots, v_n\); in particular, there must be a temporal path from \(v_1\) to \(v_n\) along the edges of the cycle. 
    The same arguments holds for every pair of vertices on the cycle.
    However, with only \(n\) temporal edges and the properness condition, it is impossible to label the edges in a way that forms strictly increasing temporal paths between all vertex pairs. 
\end{proof}
We now present the reachability separations first, followed by the support separations, and lastly the transformations. Recall that any reachability separation also implies a support separation for the corresponding settings.

\subsection{Reachability separations}
The primary separating structure is the directed triangle with the same label on every edge.
For this graph in the \setting{\D \& strict \& simple} setting, there is no reachability equivalent graph in the \setting{\D \& non-strict} setting. This follows directly from \Cref{lem:nonstrict-triangles}.
\begin{lemma}[\noTransform{\D \& strict \& simple}{R}{\D \& non-strict}] \label{lem:strict-noR-nonstrict}
    There is a graph in the \mbox{\setting{\D \& strict \& simple}} setting whose reachability graph cannot be obtained from a graph in the \setting{\D \& non-strict} setting.
\end{lemma}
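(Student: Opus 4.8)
The plan is to use the directed triangle carrying a single, uniform time label as the separating structure, exactly as announced just before the statement.

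Concretely, let $\gcal$ be the \setting{\D \& strict \& simple} temporal graph on vertices $\{v_1,v_2,v_3\}$ whose footprint is the directed $3$-cycle with edges $(v_1,v_2),(v_2,v_3),(v_3,v_1)$ and with $\lambda(e)=1$ for every edge $e$ (this is legal in the strict and simple setting, cf.\ \Cref{fig:directed-triangle}). First I would compute $\rcal(\gcal)$. Each single edge is a trivial length-one strict temporal path, so $(v_i,v_{i+1})\in E(\rcal(\gcal))$ for all $i$ (indices mod $3$). Conversely, any temporal path of length at least two must traverse two consecutive cycle edges, both labelled $1$; since $1\not<1$, this violates strictness, so no such path exists, and in particular there are no arcs $(v_1,v_3),(v_2,v_1),(v_3,v_2)$ and no loops $(v_i,v_i)$. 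Hence $\rcal(\gcal)$ is precisely the directed $3$-cycle on $\{v_1,v_2,v_3\}$, which is an induced directed cycle (it equals its own induced subgraph on all three vertices).

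Next I would invoke \Cref{lem:nonstrict-triangles}: the reachability graph of any \setting{\D \& non-strict} temporal graph cannot contain an induced cycle. Suppose, for contradiction, that some \setting{\D \& non-strict} graph $\hcal$ satisfies $\rcal(\hcal)\simeq\rcal(\gcal)$. Then $\rcal(\hcal)$ is, up to isomorphism, an induced directed $3$-cycle, contradicting \Cref{lem:nonstrict-triangles}. Therefore no such $\hcal$ exists, and $\gcal$ witnesses the claimed separation \noTransform{\D \& strict \& simple}{R}{\D \& non-strict}.

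There is essentially no technical obstacle beyond being careful about two points: (i) that $\rcal(\gcal)$ carries no extra arcs — in particular the ``shortcut'' arcs $(v_i,v_{i+2})$ are absent, which is precisely where strictness enters; and (ii) that \Cref{lem:nonstrict-triangles} indeed covers cycles of length $3$, which it does, since it is stated for induced cycles of arbitrary length. All the substantive work has already been done in \Cref{lem:nonstrict-triangles}; the statement at hand merely records that the most restrictive strict class realises this reachability graph via the uniform-label triangle, while no non-strict class can.
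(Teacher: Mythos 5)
Your proof is correct and follows exactly the paper's route: the uniform-label directed triangle in \setting{\D \& strict \& simple} has the induced directed $3$-cycle as its reachability graph, and \Cref{lem:nonstrict-triangles} rules this out for any \setting{\D \& non-strict} graph. The paper states this in one line citing \Cref{lem:nonstrict-triangles} and \Cref{fig:directed-triangle}; you merely spell out the computation of $\rcal(\gcal)$, which is a faithful elaboration of the same argument.
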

\begin{figure}[ht]
    \centering
    \includegraphics[width=0.5\linewidth]{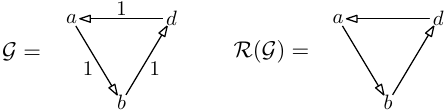}
    \caption{Directed triangle with time label 1 on each edge (left) and its reachability graph in the \setting{\D \& strict} (\setting{\& simple}) setting (right).}
    \label{fig:directed-triangle}
\end{figure}
Recall 
that \setting{\D \& strict \& simple} is a subset of \setting{\D \& strict}, 
\setting{\D \& non-strict \& simple} is a subset of \setting{\D \& non-strict}, and 
\setting{\D \& proper \& simple} is a subset of \setting{\D \& proper}, which is a subset of \setting{\D \& non-strict}.
Thus, \Cref{lem:strict-noR-nonstrict} also establishes \noTransformAddOn{\D \& strict}{(\setting{\& simple})}{R}{\D \& proper}{(\setting{\& simple})} and \noTransformAddOn{\D \& strict}{(\setting{\& simple})}{R}{\D \& non-strict}{ (\setting{\& simple})}.

Next, we separate \setting{\D \& proper} (and thus \setting{\D \& non-strict}) from \setting{\D \&  non-strict \& simple}, demonstrating that more than one label per edge is required to capture the full range of directed non-strict reachability.
As before, we utilize the structure of a directed triangle, constructing a cycle with two overlapping triangles in the reachability graph. 
\begin{lemma}[\noTransform{\D \& proper}{R}{\D \& non-strict \& simple}]
\label{lem:proper-noS-nonstrict_simple}
    There is a graph in the \setting{\D \& proper} setting whose reachability graph cannot be obtained from a graph in the \setting{\D \& non-strict \& simple} setting.
\end{lemma}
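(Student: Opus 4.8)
The plan is to exhibit a reachability graph $R$ built from two overlapping directed triangles, extended by a short ``corridor'' ending in a distinguished vertex, such that (a) $R$ is the reachability graph of an explicit \setting{\D \& proper} temporal graph in which exactly one edge carries two labels, yet (b) no \setting{\D \& non-strict \& simple} temporal graph has reachability graph $R$. Just as the lone directed triangle played that role in \Cref{lem:strict-noR-nonstrict}, here the two overlapping triangles act as a rigidity gadget: together with \Cref{lem:nonstrict-triangles} they pin down the footprint of any hypothetical simple realization, after which a timing argument forces a collision.

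\emph{Positive side.} I would first give $\gcal$ concretely: its footprint is the union of the two directed triangles and the corridor; the time labels increase along each triangle, and exactly one edge\,--\,the edge a vertex must cross both ``early'' (on behalf of one journey) and ``late'' (on behalf of another)\,--\,is assigned two labels. Properness is immediate, since two labels only ever appear on a single edge, never on two incident edges. A short inspection of the journeys then confirms $\rcal(\gcal)\simeq R$; the point is that the early label lets one journey continue past the critical edge, while the late label leaves the other journey stuck before the corridor's sink. This already places $R$ in the \setting{\D \& proper} class (and, by the negative side, $\gcal$ must be genuinely multi-labeled).

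\emph{Negative side.} Assume towards a contradiction that $\hcal$ is a \setting{\D \& non-strict \& simple} temporal graph with $\rcal(\hcal)\simeq R$. Its footprint $G$ is a spanning subgraph of $R$ (each static edge of $\hcal$ is itself a reachability). The heart of the proof is showing that $G$ is essentially forced: since $R$ contains no induced directed cycle (as it must, by \Cref{lem:nonstrict-triangles}), both directed triangles of $R$ must already be present in $G$; and every ``shortcut'' edge of $R$ one might hope to add to $G$ is self-defeating\,--\,adding it either creates a directed cycle in $G$ whose transitive chord, guaranteed by \Cref{lem:nonstrict-triangles}, is absent from $R$, or it directly creates a reachability absent from $R$. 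Carrying this out leaves only a handful of candidate footprints, and in each of them the distinguished corridor vertex $g$ has a unique in-neighbour; call that edge $e$. Since $\hcal$ is simple, $e$ has a single label $t$, so every journey that reaches $g$ uses $e$ as its last step and arrives at $g$ at time exactly $t$. Hence all vertices that reach $g$ have one and the same set of vertices reachable ``from $g$ onward starting at time $t$'', and\,--\,because the corridor's sink is, again by the footprint analysis, reachable only through $g$\,--\,either all of them reach the sink or none of them does. This contradicts $R$, which was designed so that one vertex reaches the sink via $g$ while another vertex reaches $g$ but not the sink.

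\emph{Main obstacle.} The delicate step is the footprint analysis. Unlike the purely non-strict separations (\Cref{lem:nonstrict-triangles}, \Cref{lem:strict-noR-nonstrict}), here $R$ genuinely contains all of the ``shortcut'' reachability edges, so a priori $G$ could use them to route the two journeys around $e$ and avoid the collision; proving that each such shortcut over-generates reachability (or over-generates directed cycles, via \Cref{lem:nonstrict-triangles}) is exactly where the rigidity of two overlapping directed triangles is needed, and doing this while keeping the gadget small enough that $\gcal$ is still a valid proper witness and the case analysis stays finite is the real effort. Once $G$ is pinned down, the remaining timing argument is routine.
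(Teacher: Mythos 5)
Your central mechanism is sound and is in fact the same one the paper uses: pin down the footprint of a hypothetical \setting{\D \& non-strict \& simple} realization $\hcal$, then exploit simplicity at a bottleneck vertex whose only remaining in-edge carries a single label, so that every journey enters it at the same time and all vertices reaching it inherit identical onward reachability, contradicting a built-in asymmetry. (In the paper's witness\,--\,a directed $4$-cycle $a\to b\to c\to d\to a$ in \setting{\D \& proper} with one doubly-labeled edge, whose reachability graph is the cycle plus the chords $(a,c)$ and $(b,d)$\,--\,the bottleneck is $c$: both $a$ and $b$ must arrive there at the single time $\lambda(b,c)$, yet $b$ must continue to $d$ while $a$ must not.) However, as written your proposal has a genuine gap: everything that carries the weight of the proof is deferred. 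You never exhibit $R$ or the proper witness $\gcal$, never verify that $\rcal(\gcal)\simeq R$, and the footprint-pinning analysis\,--\,which you yourself call ``the real effort''\,--\,is only promised. These pieces cannot be checked independently, because they are in tension: the proper witness is itself subject to \Cref{lem:nonstrict-triangles} (as \setting{\D \& proper} is contained in \setting{\D \& non-strict}), so any directed triangle you place in the footprint of $\gcal$ forces additional transitive arcs into $R$, and each such arc is exactly a potential shortcut that your negative-side case analysis must then rule out; your positive-side description (footprint equal to the two triangles plus the corridor, with increasing labels) creates such arcs without accounting for them.

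Moreover, one step of the sketched forcing argument is not correct as stated: from ``$R$ contains no induced directed cycle'' it does not follow that ``both directed triangles of $R$ must already be present in $G$.'' What forces edges into the footprint are reachability-degree arguments (a vertex that reaches, or is reached by, exactly one other vertex forces the corresponding direct edge), and in the paper's witness the conclusion is the opposite of yours: the forced edges together with \Cref{lem:nonstrict-triangles} \emph{exclude} the chord of each triangle from $G$, leaving only the $4$-cycle as footprint, after which the timing collision is immediate. So to complete your proof you need to (a) write down a concrete $R$, (b) check it is realizable by a proper multi-labeled graph, and (c) actually run the exclusion argument for every edge of $R$ not in the intended footprint; the paper's example shows this can be done with four vertices, no corridor and no extra sink, by choosing vertices whose reach or co-reach sets are singletons so that the forcing takes only a few lines.
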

\ifshort
\begin{figure}[h]
    \centering
    \includegraphics[width=0.5\linewidth]{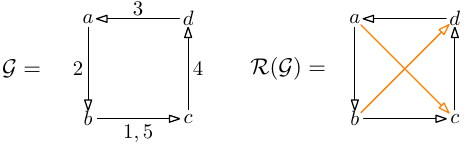}
    \caption{Temporal graph \gcal in the \setting{\D \& proper} setting (left) and the corresponding reachability graph (right). There exists no graph in the \setting{\D \& non-strict \& simple} setting whose reachability graph is isomorphic to that of \gcal.}
\end{figure}
\else
\begin{proof}
    Consider the following temporal graph \gcal in the \setting{\D \& proper} setting (left) and the corresponding reachability graph (right). 
    For the sake of contradiction, let \hcal be a temporal graph in the \setting{\D \& non-strict \& simple} setting whose reachability graph is isomorphic to that of \gcal.
    \begin{figure}[h]
        \centering
        \includegraphics[width=0.5\linewidth]{Figures/proper-no-nonstrict_simple.pdf}
    \end{figure}
    Observe that by \Cref{lem:nonstrict-triangles}, we cannot take one of the orange edges $(a,c)$ and $(b,d)$ into \hcal without taking out at least one black edge of the corresponding triangle $(a,c,d)$ and $(b,d,a)$. 

    Now, observe that $d$ reaches only vertex $a$, so \hcal needs to include the direct edge $(d,a)$. Similarly, $b$ is reached only by $a$, so \hcal needs to include $(a,b)$. This implies that we cannot use the orange edge $(b,d)$.
    Next, for $c$ to reach $d$, \hcal needs to include the edge $(c,d)$, which implies that we also cannot use the orange edge $(a,c)$.
    Thus, for $a$ to reach $c$, \hcal has to include $(b,c)$ and the footprint of \hcal has to be isomorphic to the footprint of \gcal.
    
    The reachabilities imply that $\lambda(a,b)<\lambda(d,a)<\lambda(c,d)$ and also $\lambda(c,d)<\lambda(b,c)<\lambda(a,b)$. But this requires $\lambda(a,b)<\lambda(a,b)$; a contradiction.
\end{proof}
\fi
Note that this also implies \noTransform{\D \& non-strict}{R}{\D \& non-strict \& simple}, \noTransform{\D \& proper}{R}{\D \& proper \& simple}, and  \noTransform{\D \& non-strict}{R}{\D \& proper \& simple}.

With the final reachability separation, we show that \setting{\D \& proper \& simple} is a true subset of \setting{\D \& non-strict \& simple}. This separation involves a more intricate construction: three vertices are fully connected (clique) at two distinct time steps, while vertices $a$ and $c$ must traverse through these cliques to reach $b$ and $d$. This traversal, however, has to happen at the different time steps so that $a$ reaches both $b$ and $d$, and $c$ reaches only $d$. 
Additionally, carefully placed directed paths are used to forbid certain edges (shortcuts) in the footprint. 
\begin{lemma}[\noTransform{\D \& non-strict \& simple}{R}{\D \& proper \& simple}]
\label{lem:proper_simple-noS-nonstrict_simple}
    There is a graph in the \setting{\D \& non-strict \& simple} setting whose reachability graph cannot be obtained from a graph in the \setting{\D \& proper \& simple} setting.
\end{lemma}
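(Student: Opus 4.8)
The plan is to construct a \setting{\D \& non-strict \& simple} graph \gcal whose reachability graph separates the two settings by exploiting two things that are cheap under non-strict semantics but expensive under properness: a directed triangle with a single uniform label already realizes a $3$-clique, and a vertex entering such a triangle at time $t$ can leave it from \emph{any} of its three vertices still at time $t$. Concretely, I would take three vertices $x,y,z$ and make $\{x,y,z\}$ a clique by placing the forward triangle $x\to y\to z\to x$ with one uniform label and the backward triangle $x\to z\to y\to x$ with a single later label, so the clique is ``open'' both early and late. I would attach a source $a$ by a single edge into the triangle at an early time, a source $c$ by a single edge into the triangle at an intermediate time, and sinks $b$ and $d$ by exit edges leaving the triangle at carefully chosen times, and finally hang short directed pendant paths on $a,c,b,d$. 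The labels are chosen so that $a\rightsquigarrow b$, $a\rightsquigarrow d$, $c\rightsquigarrow d$ and $c\not\rightsquigarrow b$, even though $c$ reaches every triangle vertex and every triangle vertex reaches $b$; the point is that $c$ reaches the triangle only through its ``late'' copy, while $b$ can be reached from the triangle only through its ``early'' copy --- a failure of transitivity that is harmless non-strictly but, I claim, impossible to realize with one label per edge under properness. The pendant paths exist only to make the reach-sets and reached-by-sets of $a,c,b,d$ small enough to control the footprint of any candidate.

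Assuming for contradiction a \setting{\D \& proper \& simple} graph \hcal with $\reachG{\hcal}\simeq\reachG{\gcal}$, I would argue in two stages. \textbf{Footprint pinning.} By \Cref{lem:proper-cliques}, realizing the $3$-clique on $\{x,y,z\}$ requires at least four temporal, hence footprint, edges among these vertices. For the remaining vertices I would use the footprint-forcing technique from the proof of \Cref{lem:proper-noS-nonstrict_simple}: if $v$ reaches only $w$ in the reachability graph then the unique possible out-edge of $v$ in \hcal is $(v,w)$, and dually for in-edges; the pendant paths are designed precisely so that, applied to $a,c,b,d$ and their pendants, this forces the entry edges of $a$ and $c$ and the exit edges into $b$ and $d$ in \hcal to be exactly those of \gcal, and in particular forbids a direct edge from $a$ or $c$ to $b$ or $d$. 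Since a proper temporal graph forces strictly increasing labels along every path, the internal edges of the clique --- whichever four or more are used --- cannot be oriented to short-circuit the crossing times either, which pins the footprint of \hcal up to a bounded number of symmetric variants. \textbf{Timing contradiction.} With the footprint pinned, $a\rightsquigarrow b$, $c\rightsquigarrow d$ and $c\not\rightsquigarrow b$ are all forced to route through the triangle. Writing out the strict label inequalities these impose, together with the inequalities coming from the mutual reachabilities inside the $3$-clique, and chasing them around the triangle, I expect to reach $\lambda(e)<\lambda(e)$ for some edge $e$. This is the same flavour of inequality chase as in \Cref{lem:proper-noS-nonstrict_simple}, but here two different sources must cross the same clique at two different times, which is exactly what a single label per edge together with properness cannot accommodate. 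Combined with the trivial subset transformation in the reverse direction, this also shows that \setting{\D \& proper \& simple} is a strict subset of \setting{\D \& non-strict \& simple}.

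The step I expect to be the main obstacle is the footprint pinning. \Cref{lem:proper-cliques} guarantees four or more footprint edges inside the clique but not \emph{which} ones, so \hcal could in principle use an extra clique edge, or an extra orientation, as a shortcut that defeats the timing argument; likewise, unless the pendant structure is set up carefully, \hcal might realize $a\rightsquigarrow b$ or $c\rightsquigarrow d$ by a path that avoids the delicate part of the triangle. The construction therefore has to be engineered so that every such alternative footprint produces a reachability not present in $\reachG{\gcal}$ (detected via the reach-set and reached-by-set constraints and the strict ordering forced by properness), which will require a finite but somewhat intricate case analysis over the possible footprints of \hcal. Once the footprint is nailed down, the timing contradiction should follow in a few lines.
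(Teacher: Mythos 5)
The central difficulty is that your core construction, as described, does not actually separate the two settings: the reachability pattern you build it around is realizable by a \setting{\D \& proper \& simple} graph, so neither an inequality chase nor any other contradiction can be extracted from it. Concretely, take your centre $\{x,y,z\}$, one source $a$, one source $c$, sinks $b,d$, and label the six centre edges $(y,x,1)$, $(z,x,2)$, $(x,y,3)$, $(y,z,4)$, $(x,z,6)$, $(z,y,8)$, the exits $(x,b,2.5)$ and $(z,d,7)$, and the entries $(a,y,0.5)$ and $(c,x,3.5)$. This labeling is proper and simple; the centre is a reachability clique; every centre vertex reaches both $b$ and $d$; $a$ reaches $x,y,z,b,d$; and $c$ reaches $x,y,z,d$ but not $b$, since its earliest arrival at $x$ is $3.5>2.5$ and $b$ has no other in-edge. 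Your pendant gadgets only have to reproduce local, non-path-forming behaviour and can be re-timed into the gaps (e.g.\ a forbidding gadget $d\to p\to c$ fits with labels $6.5$ and $5.5$). The flaw in the intuition is that properness does not force the adversary to realize the centre clique ``all at once'' in two separate phases: with a single $a$ and a single $c$, only a few specific entry-to-exit windows are needed, and six simple edges can be staggered so that the $b$-exit is served early by short direct hops while $c$'s entry is simply placed after that window. \Cref{lem:proper-cliques} lower-bounds the cost of one complete connectivity phase, but nothing in your construction forces two disjoint complete phases, which is exactly the leverage that is missing.

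This is also where your proposal diverges from the paper's proof in an essential way. The paper uses three copies $a_i,b_i,c_i,d_i$, one per centre vertex, together with helper vertices: the helpers forbid all shortcut edges via induced-triangle arguments (its Claim 1), and two bijection claims (Claims 2 and 3) show that each $b_i$ is entered from exactly one centre vertex and each $c_i$ exits to exactly one centre vertex. Consequently every $a_i$ must cross the entire centre to reach every $b_j$ and $d_j$, and every $c_i$ must later cross the entire centre to reach every $d_j$; this forces two temporally separated complete connectivity phases on $\{x,y,z\}$, each requiring at least four temporal edges by \Cref{lem:proper-cliques}, i.e.\ eight in total, whereas a simple graph admits at most six temporal edges on three vertices. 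The contradiction is thus a counting argument, not a $\lambda(e)<\lambda(e)$ chase. To salvage your route you would need an analogous symmetrization (or some other mechanism) forcing full centre connectivity inside each of the two time windows; note also that forbidding footprint shortcuts such as $(c,d)$ requires helper-style gadgets whose forced footprint paths would close a directed triangle with the forbidden edge (so that \Cref{lem:nonstrict-triangles} yields a missing chord), not merely pendant paths that shrink reach-sets, since the reach-set forcing technique of \Cref{lem:proper-noS-nonstrict_simple} can only force edges to be present, never absent.
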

\begin{proof}
\setcounter{ppclaim}{0} 
    Consider the following temporal graph $\gcal$ in the $\setting{\D \& non-strict \& simple}$ setting. 
    \begin{figure}[ht]
        \centering
        \includegraphics[width=0.85\linewidth]{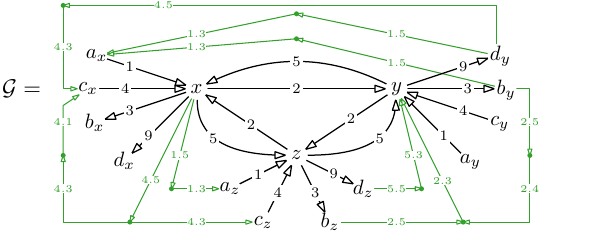}
    \end{figure}
    For sake of readability, the following edges have been omitted in the illustration, with each set having one representative shown in the illustration in green.
    Let $C=\{x,y,z\}$.
    \begin{itemize}
        \item $E_1=\{(i,h_{a_j}^i,1.5), (h_{a_j}^i,a_j,1.3), (i,h_{c_j}^i,4.5), (h_{c_j}^i,c_j,4.3) \colon i\in C \text{ and } j\in C\setminus\{i\}\}$;
        \item $E_2=\{(b_j,h_{b_j}^i,2.5), (h_{b_j}^i,i,2.3), (d_j,h_{d_j}^i,5.5), (h_{d_j}^i,i,5.3) \colon i\in C \text{ and } j\in C\setminus\{i\}\}$; 
        \item $E_3=\{(b_j,h_{b_j}^{a_i},1.5), (h_{b_j}^{a_i},a_i,1.3),(d_j,h_{d_j}^{a_i},1.5), (h_{d_j}^{a_i},a_i,1.3), (d_j,h_{d_j}^{c_i},4.5), (h_{d_j}^{c_i},c_i,4.3) \colon$\\
        \mbox{$i\in C$} and $j\in C\setminus\{i\}\}$.
        \item $E_4=\{(b_i,s,2.7), (s,h_{b_j}^i,2.5), (h_i^{c_j},s,4.3), (s,c_j,4.1) \colon i\in C \text{ and } j\in C\setminus\{i\}\}$;
    \end{itemize}
    We refer to the vertices $h^\alpha_\beta$ as \textit{helper} vertices and to $\{x,y,z\}$ as the \textit{center} vertices. The edges in $E_1$, $E_2$, $E_3$, and $E_4$ form paths of length two in the footprint and the time labels are chosen such that they do not form temporal paths.
    Specifically, for $i\in\{x,y,z\}$ and $j\in\{x,y,z\}\setminus\{i\}$, 
    $E_1$ forms paths from $i$ to $a_j$ and $c_j$; 
    $E_2$ forms paths from $b_i$ and $d_i$ to $j$; 
    $E_3$ forms paths from $b_i$ and $d_i$ to $a_j$ and $c_j$; and
    $E_4$ forms paths from $b_i$ to the helper vertices of $b_j$, and from the helper vertices of $c_j$ to $c_i$.

    For the sake of contradiction, let $\mathcal{H}$ be a temporal graph in the $\setting{\D \& proper \& simple}$ setting whose reachability graph is isomorphic to that of $\mathcal{G}$.
    Observe that the center vertices form a strongly connected component in $G_2$ and in $G_5$, and recall from \Cref{lem:proper-cliques} that in the $\setting{\D \& proper}$ setting, we need at least four temporal edges to fully connect three vertices.
    Further, observe the following reachabilities in \gcal: \begin{itemize}
        \item $x,y,z$ reach every $b_i$ and $d_i$ with $i\in\{x,y,z\}$;
        \item for $i\in\{x,y,z\}$, every $a_i$ reaches $x,y,z$, $b_j$ and $d_j$ with $j\in\{x,y,z\}$ and
        \item every $c_i$ reaches $x,y,z$ and $d_j$ for $j\in\{x,y,z\}$;
        \item every helper vertex $h^\alpha_\beta$ can reach the vertex $\beta$, 
        but not $\alpha$.
        Conversely, $\alpha$ can reach $h^\alpha_\beta$, but not $\beta$. However, $\beta$ can reach $\alpha$.
    \end{itemize}
    First, we show that the construction of the helper vertices $h^\alpha_\beta$ forbids the direct edges $(\beta,\alpha)$ in \hcal for all but $(c_j,i)$ and $(i,b_j)$.
    \begin{ppclaim} \label{claim:alien-no-direct-edge}
        In \hcal, there can be no edge from $F=\{(a_j,i),(i,d_j),(a_i,b_j),(a_i,d_j),(c_i,d_j),\allowbreak(h_{b_j}^i,b_i),(c_j,h_i^{c_j}) \colon i,j\in\{x,y,z\}\text{, }j\neq i\}$.
    \end{ppclaim}
    \begin{claimproof}[Proof of \Cref{claim:alien-no-direct-edge}]
        Recall that any labeling of a directed triangle in a $\setting{non-strict}$ or $\setting{proper}$ setting leads to at least one transitive reachability (\Cref{lem:nonstrict-triangles}), and 
        observe that every direct edge $(\beta,\alpha)$ would form a directed triangle with the corresponding green edges $(\alpha,h_{\beta}^\alpha),(h_{\beta}^\alpha,\beta)$. 
        Let $(\beta,\alpha)\in F$.
        Observe that $h_\beta^\alpha$ reaches $\beta$, but reaches no other vertex that also reaches $\beta$.
        Thus, the direct edge $(h_{\beta}^\alpha, \beta)$ must be included in the footprint $H$ of \hcal.
        Furthermore, $h_\beta^\alpha$ is reached by  $\alpha$ but by no other vertex that $\alpha$ can reach. 
        Hence, the direct edge $(\alpha,h^\alpha_{\beta})$ must also be in $E(H)$.
        As a result, $(\beta,\alpha)\notin E(H)$, since $\alpha,\beta,h^\alpha_\beta$ form a directed triangle in $\rcal(\gcal)$, which is impossible if $H$ contains the edges of the triangle.
    \end{claimproof}    
    From \Cref{claim:alien-no-direct-edge} follows that for $a_i$ to reach $d_j$ in \hcal, no shortcut edge $(a_i,j)$, $(i,d_j)$, or $(a_i,d_j)$ can be used. Additionally, $a$ reaches no vertex that also reaches $d_j$ in \gcal. 
    Thus, $a_i$~must use a temporal path with support $(a_i,i,\dots,j,d_j)$ in \hcal, and $(a_i,i),(j,d_j)\in E(H)$.
    As this holds for all $i$ and $j$, there must exist paths between all center vertices, ensuring that each $a_i$ reaches $x$, $y$ and $z$ via its respective vertex $i$. After these paths, there must also be an edge from $x$, $y$ and $z$ to their respective $d_j$.
    The following graph $\hcal'$ summarizes the current information on \hcal, with some large time step $T$. Black and green edges must be present in \hcal, while for gray edges no information is yet available.
    \begin{figure}[h]
        \centering
        \includegraphics[width=0.8\linewidth]{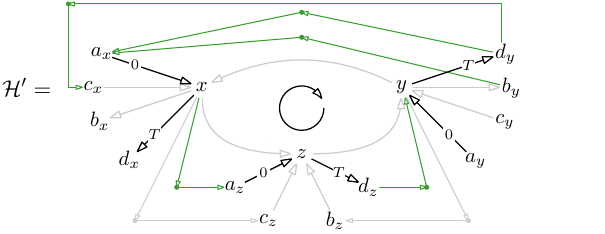}
    \end{figure}
    Next, we consider the $b$ and $c$ vertices.
    Note that the helper construction does not forbid the direct edges $(i,b_j)$ and $(c_j,i)$ for $i\neq j$ in \hcal:
    \begin{itemize}
        \item $h^{b_j}_i$ is reached only by $b_j$, requiring $(b_j,h^{b_j}_i)\in E(H)$. However, $h^{b_j}_i$ reaches $x,y,z$, $d_x,d_y,d_z$, and $b_i$ via $i$. Consequently, we could add $(h^{b_j}_i,k)$ for any $k\in\{x,y,z\}$. 
        \item $h^i_{c_j}$ only reaches $c_j$, requiring $(h^i_{c_j},c_j)\in E(H)$.
        Moreover, $h^i_{c_j}$ is reached by $x,y,z$, $a_x,a_y,a_z$, and $c_i$ via $i$. Consequently, we could add $(k,h^i_{c_j})$ for any $k\in\{x,y,z\}$. 
    \end{itemize}    
    While we cannot establish \Cref{claim:alien-no-direct-edge} for $(i,b_j)$ or $(c_j,i)$, we can prove the following slightly weaker result.
    For each $b_i$, there is exactly one center vertex (out of $x,y,z$) that has an edge to $b_i$, while the other two center vertices do not.
    Similarly, for each $c_i$, there is exactly one center vertex that $c_i$ has an edge to, while it has no edges to the remaining center vertices.
    We begin by proving this for the $b$ vertices.
    \begin{ppclaim} \label{claim:alien-bijection-b}
        There exists a bijective mapping $f\colon\{x,y,z\}\rightarrow\{x,y,z\}$ so that \mbox{$(f(i),b_i)\in E(H)$} and $(k,b_i)\notin E(H)$ for all $k\in\{x,y,z\}\setminus\{f(i)\}$.
    \end{ppclaim}
    \begin{claimproof}
        By \Cref{claim:alien-no-direct-edge}, $a_i$ cannot reach $b_j$ in \hcal via the direct edge $(a_i,b_j)$.
        Therefore, $a_i$ must traverse a temporal path starting at $i$.
        Since $b_j$ can only be reached by the center vertices $x,y,z$ and  $a_x,a_y,a_z$, it follows that $(x,b_j)$, $(y,b_j)$ or $(z,b_j)$ must be in \hcal.
        
        If $(i,b_i)\in E(H)$ and the corresponding helper edges are included in \hcal, the claim holds with $f$ defined as the identity mapping.
        Thus, it remains to prove the claim for the case where $(i,b_i)\not\in E(H)$ for at least one $i\in\{x,y,z\}$.
        Without loss of generality, assume $(z,b_z)\not\in E(H)$; the remaining cases follow by symmetry of the construction.
        
        From $(z,b_z)\notin E(H)$ follows $(y,b_z)\in E(H)$ or $(x,b_z)\in E(H)$ to ensure that $b_z$ is reachable.
        If $(y,b_z)\in E(H)$, then $(h_{b_z}^y,y)\not\in E(H)$, as otherwise, $y, b_z,h_{b_z}^y$ would form a directed triangle in both $H$ and $\rcal(\gcal)$, which is impossible in a \setting{simple \& proper} graph. Consequently, $h_{b_z}^y$ must have an edge to $x$ or to $z$.
        The same reasoning applies if $(x,b_z)\in E(H)$, implying $(h_{b_z}^x,x)\not\in E(H)$ and requiring $h_{b_z}^x$ to have an edge to $y$ or to $z$.
        
        Now, observe that the helper vertices cannot have edges to arbitrary center vertices.
        Let~the center vertex $i\in\{x,y,z\}$ have incoming edges from two helper vertices with different out-neighbors in $\mathcal{G}$, for example, $h^y_{-}$ with an edge to $y$ in \gcal, and $h^z_-$ with an edge to $z$ in~\gcal.
        These two vertices have incompatible reachabilities: $h^y_-$ must reach $b_y$ (but not $b_z$), and $h^z_-$~must reach $b_z$ (but not $b_y$).
        Thus, at most one of the two can reach their respective $b$ vertex via $i$.
        If both $h^y_-$ and $h^z_-$ had temporal paths to their $b$ vertex starting in~$i$, they could not depart $i$ at the same time, as this would result in both reaching $b_y$ and~$b_z$.
        Therefore, one temporal path must occur earlier than the other. However, the helper vertex taking the earlier path could also take the later path and would reach both $b_y$ and $b_z$.
        Consequently, one of the helper vertices must reach its respective $b$ vertex via a different center vertex. 

        This implies that, to satisfy the reachabilities of the helper vertices, all $h^i_{b_j}$ for a fixed $i$ must reach $b_i$ via the same center vertex.
        Consequently, there exists a mapping $f\colon\{x,y,z\}\rightarrow\{x,y,z\}$ such that $(h_{b_j}^i,f(i))\in E(H)$.
        
        Consider the center vertex $y$ and assume $f^{-1}(y)=x$. 
        According to this mapping, $y$ has incoming edges from $h^{x}_{b_y}$ and $h^{x}_{b_z}$.
        If $(y,b_y)\in E(H)$, then $h^{x}_{b_y},y,b_y$ would form a triangle in both $H$ and $\rcal(\hcal)$\,--\,a contradiction.
        Similarly, if $(y,b_z)\in E(H)$, then $h^{x}_{b_z},y,b_z$ would form a triangle in both $H$ and $\rcal(\hcal)$. 
        Thus, we conclude that only $(y,b_x)\in E(H)$. 
        The same holds for any other center vertex and any possible mapping $f$. 
        
        Therefore, there exists a bijective mapping $f$ on the center vertices such that $(f(i),b_i)\in E(H)$ for all $i\in\{x,y,z\}$ and $(k,b_i)\not\in E(H)$ for all $k\in\{x,y,z\}\setminus\{f(i)\}$.
    \end{claimproof}
    The proof for the $c$ vertices follows analogously to that of the $b$ vertices.
    Here, every occurrence of $b_i$ is replaced with $c_i$, and the direction of each mentioned edge is reversed. Instead of two helper edges pointing toward the same center vertex leading to a contradiction, the contradiction arises from two helper edges pointing away from the same center vertex.
    This reversal of direction results in the same type of mapping, ensuring that each $c_i$  has an edge to exactly one center vertex, with no edges to the others. The reasoning and conclusions remain identical, making a separate proof unnecessary.
    \begin{ppclaim} \label{claim:alien-bijection-c}
        There exists a bijective mapping $g\colon\{x,y,z\}\rightarrow\{x,y,z\}$ so that \mbox{$(c_i,g(i))\in E(H)$} and $(c_i,k)\notin E(H)$ for all $k\in\{x,y,z\}\setminus\{g(i)\}$.
    \end{ppclaim}
    We will now draw the final conclusion, proving that \hcal can have no \setting{ proper \& simple} labeling. 
    Let $i\in\{x,y,z\}$ and $f$ and $g$ be bijections as in \Cref{claim:alien-bijection-b} and \Cref{claim:alien-bijection-c}. 
    
    The labeling of $(a_i,i), (c_{g(i)},i), (i,b_{f(i)}),(i,d_i)$ must ensure that $a_i$ reaches both $b_{f(i)}$ and $d_i$, while $c_{f(i)}$ reaches only $d_i$.
    Consequently, $\lambda(a_i,i)<\lambda(i,b_{f(i)})<\lambda(i,c_{g(i)})<\lambda(i,d_i)$.

    $c_{g(i)}$ has to reach every $d_j, j\in\{x,y,z\}$.
    Since no shortcut edge can be used, $c_{g(i)}$ must traverse a temporal path with support $(c_{g(i)},i,\dots,j,d_j)$.
    This traversal must happen after the times $\lambda(c_{g(i)},i)$, and, consequently, after $\lambda(i,b_{f(i)})$.
    Thus, the $a$ and the $c$ vertices must travel at different times, requiring an $x,y,z$-clique to exist at two different points in time.
    Each of these requires four temporal edges between $x$, $y$, and $z$ in a \setting{proper} graph; eight in total. However, at most six temporal edges can exist between $x,y,z$ in a \setting{simple} graph, a contradiction.

\end{proof}

\subsection{Support separations}
We present the remaining separations that hold only under support equivalence, namely the separating structures that show that \setting{\D \& strict \& simple} is strictly more expressive than \setting{\D \& non-strict} and \setting{\D \& proper}. Note that these do not hold under reachability equivalence because of the transformation presented in \Cref{subsec:saturation}.

First, we separate \setting{\D \& non-strict} (\setting{\& simple}) from \setting{\D \& strict \& simple} using the (by now well known) directed triangle.
\begin{lemma}[\noTransform{\D \& non-strict \& simple}{S}{\D \& strict \& simple}]
    There is a graph in the \setting{\D \& non-strict \& simple} setting such that there is no support equivalent graph in the \setting{\D \& strict \& simple} setting.
\end{lemma}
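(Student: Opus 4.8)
The plan is to reuse the directed triangle with a uniform label, already employed in \Cref{lem:strict-noR-nonstrict} and depicted in \Cref{fig:directed-triangle}. Let $\gcal$ be the temporal graph on vertices $\{v_1,v_2,v_3\}$ whose footprint is the directed cycle $v_1\to v_2\to v_3\to v_1$, with $\lambda(e)=1$ for every edge $e$. First I would enumerate the supports of temporal paths of $\gcal$ in the \setting{\D \& non-strict} setting: the three single arcs $v_1v_2$, $v_2v_3$, $v_3v_1$, and the three length-two paths $v_1v_2v_3$, $v_2v_3v_1$, $v_3v_1v_2$ (each valid, since along any two consecutive arcs the labels $1,1$ are non-decreasing); no longer support occurs, because a path in the footprint visits distinct vertices and the footprint is a $3$-cycle.

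Next, suppose toward a contradiction that $\hcal$ is a \setting{\D \& strict \& simple} graph that is support equivalent to $\gcal$; in particular $V(\hcal)=\{v_1,v_2,v_3\}$. Since each single arc of the triangle is the support of a temporal path of $\gcal$, all three arcs $v_1\to v_2$, $v_2\to v_3$, $v_3\to v_1$ must lie in the footprint of $\hcal$; conversely, any additional arc would yield a temporal path in $\hcal$ whose support does not occur in $\gcal$, contradicting support equivalence. Hence the footprint of $\hcal$ is exactly this directed triangle, and by simplicity each of its three arcs carries a single time label.

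Finally I would derive the contradiction from the three length-two supports. Realizing $v_1v_2v_3$ as a \emph{strict} temporal path of $\hcal$ forces $\lambda(v_1v_2)<\lambda(v_2v_3)$; realizing $v_2v_3v_1$ forces $\lambda(v_2v_3)<\lambda(v_3v_1)$; realizing $v_3v_1v_2$ forces $\lambda(v_3v_1)<\lambda(v_1v_2)$. Chaining these three inequalities gives $\lambda(v_1v_2)<\lambda(v_1v_2)$, a contradiction, so no support equivalent $\hcal$ in the \setting{\D \& strict \& simple} setting can exist. I do not expect any genuine obstacle: the core is the cyclic-inequality argument of \Cref{lem:nonstrict-triangles} transposed from the non-strict side to the strict side, and the only step needing a careful sentence is the footprint-rigidity observation, where support equivalence pins down the footprint of $\hcal$ to be exactly the directed triangle (no missing arc, no extra arc).
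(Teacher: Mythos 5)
Your proposal is correct and follows essentially the same route as the paper: the paper also uses the directed triangle with a uniform label and derives the cyclic chain $\lambda(a,b)<\lambda(b,c)<\lambda(c,a)<\lambda(a,b)$ from the three length-two supports, yielding the same contradiction. Your additional footprint-rigidity remark is fine but not strictly needed, since in a simple graph the unique labels on the three triangle arcs already force the inequalities regardless of any extra arcs.
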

\ifshort
\begin{figure}[ht]
    \centering
    \includegraphics[width=0.5\linewidth]{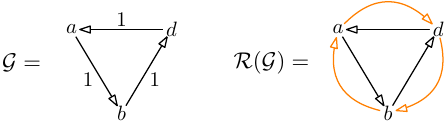}
    \caption{Temporal graph \gcal (left) in the \setting{non-strict \& simple} setting and the corresponding reachability graph (right). There is no support equivalent graph in the \setting{strict \& simple} setting.}
\end{figure}
\else
\begin{proof}
    Consider the following simple temporal graph \gcal (left) in the \setting{non-strict} setting and the corresponding reachability graph (right).
    For the sake of contradiction, let \hcal be a support equivalent temporal graph in the \setting{strict \& simple} setting.
    \begin{figure}[ht]
        \centering
        \includegraphics[width=0.5\linewidth]{Figures/nonstrict_simple-no-strict_simple.pdf}
    \end{figure}
    To preserve the support of $a$ to $c$ via $b$, \hcal needs $\lambda(a,b)<\lambda(b,c)$. Similarly, for $b$ to reach $a$ via $c$ and $c$ to reach $b$ via $a$, we get $\lambda(a,b)<\lambda(b,c)<\lambda(c,a)<\lambda(a,b)$, a contradiction.
\end{proof}
\fi

The separation of \setting{\D \& proper} from \setting{\D \& strict \& simple}, is achieved by the same structure separating \setting{\D \& proper} from \setting{\D \& non-strict \& simple} (\Cref{lem:proper-noS-nonstrict_simple}). 
\begin{lemma}[\noTransform{\D \& proper}{S}{\D \& strict \& simple}]\label{lem:noTransform-proper-S-strict_simple}
    There is a graph in the \setting{\D \& proper} setting such that there is no support equivalent graph in the \setting{\D \& strict \& simple} setting.
\end{lemma}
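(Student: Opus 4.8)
The plan is to reuse the separating structure of \Cref{lem:proper-noS-nonstrict_simple} unchanged; only the argument needs to be adapted from reachability equivalence to support equivalence. Let $\gcal$ be the \setting{\D \& proper} graph constructed there (the directed $4$-cycle $a\to b\to c\to d\to a$ together with the two chords $a\to c$ and $b\to d$, with its given proper labeling), and suppose for contradiction that $\hcal$ is a \setting{\D \& strict \& simple} graph support equivalent to $\gcal$. By definition of support equivalence, $V(\hcal)=V(\gcal)$.

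First I would observe that support equivalence pins down the footprint. Every single temporal edge $(e,t)$ is a temporal path (vacuously with strictly increasing labels) whose support is the static edge $e$; hence the set of length-one supports of a temporal graph equals its footprint edge set. Since $\gcal$ and $\hcal$ realise exactly the same supports, their footprints coincide, so $\hcal$ also has the $4$-cycle plus the two chords as footprint and, being simple, assigns exactly one label $\lambda_\hcal(e)$ to each of these six edges. This is precisely where support equivalence buys us more than reachability equivalence: in \Cref{lem:proper-noS-nonstrict_simple} the footprint had to be recovered through \Cref{lem:nonstrict-triangles}, an argument that is unavailable in the strict setting, whereas here it is immediate.

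Next I would transfer the label-ordering argument of \Cref{lem:proper-noS-nonstrict_simple}. On the positive side, each reachability of $\gcal$ is witnessed by a temporal path whose support must also be realised in $\hcal$; since $\hcal$ is strict, that support is realised along strictly increasing labels, forcing a chain of strict inequalities among the corresponding $\lambda_\hcal$-values. On the negative side, every support absent from $\gcal$ must be absent from $\hcal$, which forbids the corresponding ordering ($\lambda_\hcal$ must fail to strictly increase along it). Tracing these constraints around the directed $4$-cycle of the footprint reproduces the cyclic chain of \Cref{lem:proper-noS-nonstrict_simple}, of the form $\lambda_\hcal(a,b)\le\lambda_\hcal(d,a)<\lambda_\hcal(c,d)<\lambda_\hcal(b,c)<\lambda_\hcal(a,b)$, which contains at least one strict inequality and therefore collapses to $\lambda_\hcal(a,b)<\lambda_\hcal(a,b)$, a contradiction. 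Since each vertex sequence determines a unique path in the fixed footprint, $\hcal$ has no freedom to realise any of these supports along a different walk, so the contradiction is unavoidable and no such $\hcal$ exists.

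The step I expect to be the main obstacle is the precise re-derivation of this cyclic chain: checking that the specific reachabilities used for $\gcal$ in \Cref{lem:proper-noS-nonstrict_simple} are genuinely witnessed by supports of the claimed shape, that the corresponding non-reachabilities genuinely forbid the complementary orderings, and, most importantly, that at least one inequality around the $4$-cycle remains strict even after the relaxation to ``$\le$'' caused by $\hcal$ being strict-and-simple rather than proper (equal labels on adjacent edges are now permitted, but this only deletes supports and can never supply a missing reachability). Once that bookkeeping is in place, the final contradiction is identical to the one in \Cref{lem:proper-noS-nonstrict_simple}, and it establishes both the support separation and, a fortiori, the corresponding statement under bijective equivalence.
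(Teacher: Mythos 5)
Your plan is essentially the paper's proof: same separating graph, same strategy (pin down the footprint, then derive an impossible ordering of the labels on the $4$-cycle from the supports that must, respectively must not, be realizable), and your observation that length-one supports force the footprints to coincide is a clean way to phrase what the paper argues more informally (``all black edges are required'' and ``adding orange edges will not achieve the transitive reachabilities''). Two points need fixing, however. First, you misread the separating structure: the chords $a\to c$ and $b\to d$ are edges of the \emph{reachability graph} only, not of $\gcal$. The footprint of $\gcal$ is just the directed $4$-cycle, and what makes $\gcal$ proper-but-not-simple is a multi-labeled cycle edge (for instance $\lambda(a,b)=\{1\}$, $\lambda(b,c)=\{0,4\}$, $\lambda(c,d)=\{3\}$, $\lambda(d,a)=\{2\}$ realizes exactly the two transitive reachabilities $a\rightsquigarrow c$ and $b\rightsquigarrow d$). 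A proper graph whose footprint contained the chord $(a,c)$ together with $(c,d)$ and $(d,a)$ would, by the cycle argument of \Cref{lem:nonstrict-triangles}, create a transitive reachability among $a,c,d$ that is absent from $\rcal(\gcal)$, so the structure you describe does not even exist. Under your footprint-pinning step, $\hcal$'s footprint is then exactly the $4$-cycle, which only simplifies matters.

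The genuine gap is in the cyclic chain itself, precisely the step you flagged as the main obstacle. The final link $\lambda_\hcal(b,c)<\lambda_\hcal(a,b)$ is not derivable; the required support $(a,b,c)$ forces the strict opposite, $\lambda_\hcal(a,b)<\lambda_\hcal(b,c)$. Moreover, the constraints you do justify are consistent on their own: $\lambda(a,b)=1$, $\lambda(b,c)=2$, $\lambda(c,d)=3$, $\lambda(d,a)=1$ realizes $(a,b,c)$ and $(b,c,d)$ while forbidding $(c,d,a)$ and $(d,a,b)$, so no contradiction arises from them. The missing ingredient, which the paper's case analysis on $\lambda(b,c)$ does use, is the \emph{absent} support $(a,b,c,d)$ (equivalently, $a$ does not reach $d$ in $\gcal$): the two required transitive supports force $\lambda_\hcal(a,b)<\lambda_\hcal(b,c)<\lambda_\hcal(c,d)$, which produces a strict temporal path with support $(a,b,c,d)$ in $\hcal$, contradicting support equivalence. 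With that substitution your argument closes and coincides with the paper's; your handling of strict versus non-strict inequalities (forbidden supports only yield ``$\ge$'', required supports yield ``$<$'' because $\hcal$ is strict) is the right way around and causes no further trouble.
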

\ifshort
\begin{figure}[ht]
    \centering
    \includegraphics[width=0.5\linewidth]{Figures/proper-no-nonstrict_simple.pdf}
    \caption{Temporal graph \gcal (left) in the \setting{non-strict \& simple} setting and the corresponding reachability graph (right). There is no support equivalent graph in the \setting{strict \& simple} setting.}
\end{figure}
\todo{fix captions}
\else
\begin{proof}
    Consider the following temporal graph \gcal in the \setting{\D \& proper} setting (left) and the corresponding reachability graph (right). 
    For the sake of contradiction, let \hcal be a support equivalent temporal graph in the \setting{\D \& strict \& simple} setting.
    \begin{figure}[ht]
        \centering
        \includegraphics[width=0.5\linewidth]{Figures/proper-no-nonstrict_simple.pdf}
    \end{figure}
    
    Observe that we require all black edges in \hcal to preserve the support of the direct reachabilities. Furthermore, there are two transitive reachabilities, namely $a$ reaches $c$ via $b$ and $b$ reaches $d$ via $c$. Now observe that $c$ not reaching $a$ and $d$ not reaching $b$ imply $\lambda(c,d)>\lambda(d,a)>\lambda(a,b)$.
    There are two options for the label of $(b,c)$.
    Either $\lambda(b,c)>\lambda(a,b)$ and $\lambda(b,c)>\lambda(c,d)$, which enables $a$ to reach $c$ via $b$ and avoids $a$ reaching $d$,
    or $\lambda(b,c)<\lambda(a,b)<\lambda(c,d)$ which enables $b$ to reach $d$ via $c$.
    However, in the first case, $a$ cannot reach $c$ via $b$ and in the second, $b$ cannot reach $d$ via $c$.

    Adding some of the orange edges to \hcal will not achieve the transitive reachabilities, so there can be no support equivalent labeling.
\end{proof}
\fi
Since \setting{\D \& proper} is a subset of \setting{\D \& strict}, \Cref{lem:noTransform-proper-S-strict_simple} also shows that \noTransform{\D \& strict \& simple}{S}{\D \& strict}.

\subsection{Transformations}
We present the transformations for the directed temporal settings.

\ifshort
\else
First, in \Cref{subsubsec:dilation}, we generalize the \emph{dilation} technique \cite{casteigts_simple_2024} to also work for directed graphs. We call it \textit{support-dilation}, as it yields a support-preserving transformation from \setting{\D \& non-strict} to \setting{\mbox{\D \& proper}}. As it is a direct generalization of dilation, it can also lead to a blow-up of the graphs lifetime of up to $(n-1)\cdot(\Delta+1)\cdot\tau$, where $\Delta$ is the maximum degree of the graph.
We then present a different transformation, which we call \textit{reachability-dilation}, which is only reachability-preserving, but constructs a graph with at most twice as many labels, which is significantly smaller than the blow-up of support-dilation.

Then, in \Cref{subsec:saturation}, we present the reachability-preserving transformation from \textbf{any} directed setting into the \setting{\D \& strict \& simple} setting.

Lastly, in \Cref{subsubsec:semaphore}, we discuss the generalization of the \emph{Semaphore} construction, which transforms the \setting{\D \& strict} setting into the \setting{\D \& proper \& simple} setting.
\fi

\subsubsection{Support-Dilation (\yesTransform{non-strict}{S}{proper}) and Reachability-Dilation (\yesTransform{non-strict}{R}{proper})} \label{subsubsec:dilation}
Before analyzing directed dilation, we compare the structure of the snapshots of \setting{non-strict} temporal graphs in an \setting{undirected} and \setting{directed} setting.
In undirected graphs, each snapshot consists of one to $n$ connected components, each forming a clique in
\begin{figure}[ht]
    \centering
    \includegraphics[width=0.7\linewidth]{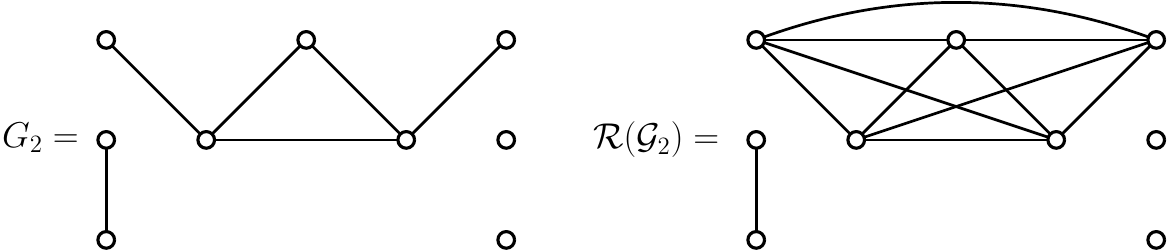}
    \caption{A cutout of some undirected temporal graph at time $2$ with four connected components (left) and the reachability graph of the subgraph at time $2$ (right).
    }
    \label{fig:undirected-snapshot-reachability}
\end{figure}
the reachability graph.
In contrast, snapshots of directed temporal graphs consist of one to $n$ weakly connected components, which can be interpreted as a directed acyclic graph (DAG). Such a DAG has as vertices the strongly connected subgraphs (possibly of size 1) of the weakly connected component connected by directed edges in an acyclic manner.
In the reachability graph, each strongly connected component forms a clique, while the DAG forms its \begin{figure}[ht]
    \centering
    \includegraphics[width=0.7\linewidth]{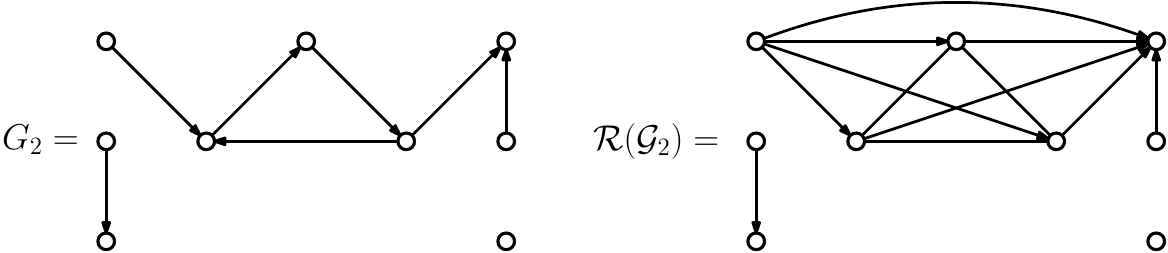}
    \caption{A cutout of some directed temporal graph at time $2$ with three weakly connected components (left) and the reachability graph of the subgraph at time $2$ (right).}
    \label{fig:directed-snapshot-reachability}
\end{figure}
transitive closure.
Now, comparing these structures, we observe that undirected snapshots represent a special case of directed snapshots. This insight allows us to generalize the undirected dilation technique introduced in \cite{casteigts_simple_2024}.

\textit{Dilation} transforms \setting{non-strict} graphs into \setting{proper} graphs while preserving support.
The process operates on each snapshot individually, duplicating every non-strict journey within a connected component of the snapshot, thereby stretching the time step to construct equivalent strict journeys. Subsequent snapshots are shifted accordingly.
Note that the new lifetime may be as large as $(n-1)\cdot(\Delta+1)\cdot\tau$, where $\Delta$ is the maximum degree. 

For directed graphs, we generalize the dilation process to also handle weakly connected components. We term this generalized approach \emph{support-dilation}, to distinguish it from the second dilation process we introduce in the later part of this section which is only reachability-preserving but achieves a much smaller blow-up of the lifetime.
\begin{definition}[Support-Dilation]
    Let \gcal be a temporal graph and consider each snapshot $G_t$ individually. Without loss of generality, we assume $t=1$ (otherwise, we shift the labels of the earlier snapshots in the construction).

    Every snapshot consists of weakly connected components $\mathbb{W}$ that each can be interpreted as a DAG.
    We consider each $W\in\mathbb{W}$ separately, so let $D=(\mathbb{S},E)$ be the DAG representing $W$. This DAG contains strongly connected components $\mathbb{S}$ (possibly of size 1) as vertices connected by directed edges. Any DAG can be topologically ordered, \ie the vertices can be arranged as a linear ordering that is consistent with the edge directions. Let $\ell$ be such an ordering.

    First, in order of $\ell$, we label the edges in $E$ \emph{between} the strongly connected components with distinct time labels starting with 1. Second, in order of $\ell$, we ``dilate'' the edges \emph{inside} each strongly connected component as follows (subsequent labels on $E$ are shifted accordingly):
    Let $S\in\mathbb{S}$ be a strongly connected component and $\alpha=\ell(S)$. First, we assign each edge the set of labels $\{\alpha+1,\alpha+2,\dots,\alpha+k\}$, where $k$ is the longest directed path in $S$. Now, we color the edges of $S$ such that no adjacent edges have the same color. By Vizing's theorem, this requires at most $\Delta+1$ colors, where $\Delta$ is the maximum degree of $S$. Let $c\colon E_S\rightarrow[0,\Delta]$ be such a coloring and let $0<\varepsilon<\frac{1}{\Delta+1}$. Now, for every edge $e\in E_S$, we add $c(e)\cdot\varepsilon$ to every label of $e$.    

    After processing every snapshot, support-dilation returns the adjusted temporal graph.
\end{definition}
Note that the dilation of the strongly connected components is a direct analogue of the dilation process of \cite{casteigts_simple_2024}, while the main addition is the ordering and temporal shifting of the strongly connected components within the DAG representing a weakly connected component. 
We now proceed to prove that support-dilation indeed preserves support.
\begin{theorem} [\yesTransform{\D \& non-strict}{S}{\D \& proper}]
\label{thm:support-dilation}
    Given a \setting{\D \& non-strict} temporal graph \gcal, \emph{support-dilation} transforms \gcal into a \setting{\D \& proper} temporal graph \hcal such that there is a non-strict temporal path in \gcal if and only if there is a strict temporal path in \hcal with the same support.
\end{theorem}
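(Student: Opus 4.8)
The plan is to fix a single snapshot $G_t$ (without loss of generality $t=1$, as in the definition) and argue that the dilation acting on it establishes, \emph{locally in that time step}, a support-preserving correspondence between non-strict paths living entirely in $G_t$ and strict paths in the dilated copy; then to glue these local correspondences across snapshots using the fact that the shifting keeps the original snapshot order intact. The key structural observation to exploit is the one already made before the theorem: a weakly connected component $W$ of $G_t$ is a DAG $D=(\mathbb{S},E)$ on its strongly connected components, and a non-strict walk through $W$ visits the $\mathbb{S}$-vertices in an order consistent with a topological order $\ell$ of $D$, possibly lingering inside a single strongly connected component $S$ before moving on. So the proof naturally splits into: (i) inside one $S$, the classical dilation argument of \cite{casteigts_simple_2024} applies verbatim; (ii) the inter-component edges, having received strictly increasing distinct labels in $\ell$-order, can only be traversed in $\ell$-order, which is exactly the order any non-strict path must respect; (iii) the $\varepsilon$-perturbation (with $0<\varepsilon<\tfrac{1}{\Delta+1}$) is small enough that it never reorders an inter-component edge relative to the intra-component block it sits between.

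The main steps, in order: First I would set up notation for the dilated snapshot: each edge inside a strongly connected component $S$ with $\ell(S)=\alpha$ gets labels $\{\alpha+1+c(e)\varepsilon,\dots,\alpha+k+c(e)\varepsilon\}$ where $k$ is the length of the longest directed path in $S$ and $c$ is a proper $(\Delta+1)$-edge-coloring, and each inter-component edge gets a distinct integer label assigned in $\ell$-order; I would record the total stretch so the ``shift subsequent snapshots accordingly'' clause is well-defined. Second, I would verify \emph{properness}: two edges sharing a vertex either lie in the same $S$ (then $c$ gives them different colors, and since $\varepsilon(\Delta+1)<1$ and they share the same integer label set, the perturbed labels differ), or one is an inter-component edge at a ``boundary'' vertex of $S$ (its integer label is assigned in a separate pass and one checks it never coincides with any perturbed intra-component label because the integer parts are separated by the shifting), or both are inter-component edges (distinct integers by construction). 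Third, the forward direction ($\Rightarrow$): given a non-strict path $P$ in $\gcal$, decompose it by snapshots; within a snapshot it enters a weakly connected component, moves through a topologically-ordered sequence of strongly connected components and, inside each, follows a directed walk of length at most $k$; I map this to the strict path in $\hcal$ using the classical ``dilation makes a length-$\le k$ directed walk in $S$ strictly increasing'' lemma for the intra-$S$ segments and the $\ell$-ordered integer labels for the transitions, and I check the seams between snapshots still line up since their relative order is preserved. Fourth, the reverse direction ($\Leftarrow$): given a strict path $Q$ in $\hcal$, project each label to its integer part to recover a non-strict path in $\gcal$ with the same support; here the point is that every strict path in $\hcal$, when projected, is non-decreasing, and no footprint edges were added or removed, so the support is a genuine path in $G$.

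The main obstacle I expect is step three, specifically the \textbf{claim that every non-strict path within a single snapshot respects a topological order of the DAG $D$ and does not ``oscillate'' between two strongly connected components}. A non-strict path could in principle leave $S_1$, enter $S_2$, and come back to $S_1$ if the footprint allowed it — but it cannot, precisely because $D$ is acyclic, so once you leave a strongly connected component you can never return; I need to state this cleanly and use it to justify that the inter-component edges of $P$, read along $P$, form an $\ell$-increasing sequence, which is what makes the distinct increasing integer labels traversable. A secondary but fiddly obstacle is the bookkeeping for the ``subsequent labels on $E$ are shifted accordingly'' phrase: the intra-$S$ dilation for each $S$ inserts up to $k$ extra integer levels, so the integer labels of later inter-component edges and later strongly connected components must be recomputed, and I must make sure this recomputation still respects $\ell$ and still keeps the snapshot blocks disjoint in time; I would handle this by defining the whole snapshot's relabeling as a single monotone reindexing of a totally ordered set of ``events'' (inter-component edges and intra-component levels) consistent with $\ell$, which makes both properness and the path correspondence fall out uniformly. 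Finally I would remark that the $\varepsilon$ can be taken rational (or the labels rescaled to integers) so the output is a legitimate temporal graph with $\mathbb{N}$-valued labels, and note the lifetime bound $(n-1)(\Delta+1)\tau$ as claimed.
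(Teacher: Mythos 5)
Your proposal follows essentially the same route as the paper's proof: properness via the snapshot/weakly-connected-component/SCC decomposition and the $(\Delta+1)$-coloring perturbation, and path preservation via the CCS intra-SCC dilation argument combined with the topological order of the DAG and the preserved ordering of snapshots. You are in fact slightly more explicit than the paper on two points it leaves implicit (that acyclicity of $D$ prevents a path from re-entering an SCC, and the reverse direction by projecting dilated blocks back to their snapshot), but these are refinements of the same argument, not a different approach.
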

\begin{proof}
    (\hcal is proper.) Every snapshot is processed independently and shifted according to changes in previous snapshots. Within each snapshot, the weakly connected components are handled separately, which is valid as they are independent with respect to a proper labeling.
    Within each weakly connected component, the strongly connected components are processed independently, following the ordering determined by the corresponding DAG. Each strongly connected component is shifted according to changes occurring earlier in the DAG ordering.
    As argued by \cite{casteigts_simple_2024}, the addition of $c(e)\cdot\varepsilon$ to the labels of every edge $e$ according to the coloring $c$ in the strongly connected components ensures that components are labeled `proper'ly. Consequently, \hcal is proper.\\
    (\hcal preserves the temporal paths.) Given a weakly connected component in a snapshot $G_t$ of \gcal considered independently, let $D=(\mathbb{S},E)$ be the DAG representing it. Without loss of generality, assume $t=1$. Now, consider some strongly connected component $S\in\mathbb{S}$ at $\alpha=\ell(S)$ in the ordering of $D$. 
    The longest path in $S$ has length $k$ and in \hcal, every edge of $S$ is assigned all the labels from $\alpha+1$ to $\alpha+k$. Thus, by the same argument as in \cite{casteigts_simple_2024}, for every (non-strict) path of length $k'$ in $S$, there is a \textit{strict} temporal path in the adjustment of $S$ in \hcal along the same sequence of edges, going over the same edges but with labels $1,2,\dots,k'$ (up to addition of $c(e)\cdot\varepsilon$).
    Now, since the order of the strongly connected components in a DAG is maintained, temporal paths over multiple strongly connected components in one snapshot are preserved.
    The same holds for temporal paths spreading over multiple snapshots. 
\end{proof}
For a reachability-preserving transformation, we propose a more efficient dilation process prioritizing sparsity. Unlike support-dilation, which reconstructs every path within each strongly connected component, \emph{reachability-dilation} replaces each strongly connected component with a bidirected spanning tree
with a distinct, single label on each directed edge.
The components are then connected according to the DAG ordering as in the support-dilation process.
This approach replaces an edge between $u$ and $v$ at time step $t$ with at most two temporal edges, substantially reducing the blow-up in the graph's lifetime. The resulting lifetime is bounded by $2\cdot\tau$.
\ifshort \else 
Refer to \Cref{fig:definition-reachability-dilation} for an illustration of the construction.
\fi
\begin{definition} [Reachability-Dilation] \label{def:reachability-dilation}
    Let \gcal be a temporal graph and consider each snapshot $G_t$ individually. Without loss of generality, $t=1$.
    Let $D=(\mathbb{S},E)$ be the DAG representing a weakly connected component of $G_1$, and $\ell$ an ordering of $D$.

    First, in the order of $\ell$, we label the edges in $E$ \emph{between} the strongly connected components with distinct time labels starting with 1. Second, in the order of $\ell$, we replace the edges \emph{inside} each strongly connected component as follows (subsequent labels on $E$ are shifted accordingly):
    Let $S\in \mathbb{S}$. As $S$ is strongly connected, the underlying undirected graph contains a spanning tree $T_S$. It was shown by \cite{christiann_inefficiently_2024} that one can temporally connect any undirected tree using at most two labels per edge. Slightly adjusting their construction and argument, we temporally connect $T_S$ by turning it into a bidirected tree and placing one distinct time label one each directed edge: 
    Choose some vertex of $S$ as the arbitrary root. Starting at the leafs, assign distinct, increasing labels to the upwards edge until reaching the root. Now, starting at the root, assign further increasing labels to the downwards edges until reaching every leaf. 

    After processing every snapshot, reachability-dilation returns the adjusted temporal graph.
\end{definition}
\ifshort\else
\begin{figure}[ht]
    \centering
    \includegraphics[width=\linewidth]{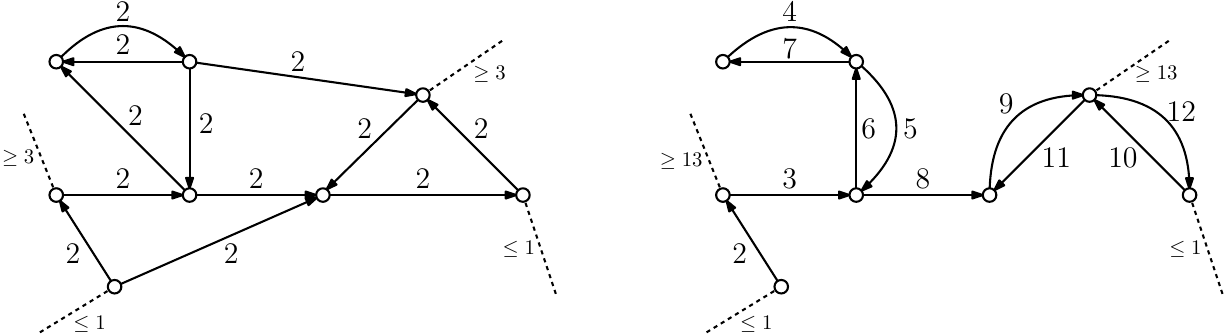}
    \caption{Illustration of the \textit{reachability-dilation} described in \Cref{def:reachability-dilation} for some \gcal with snapshot $G_2$ (left) and the proper labeling with the bidirected subtrees in the DAG structure (right). The dotted edges represent adjacent earlier and later (shifted in the proper labeling) edges.}
    \label{fig:definition-reachability-dilation}
\end{figure}
\fi
In the following, we prove that reachability-dilation is reachability-preserving. But first, we show that the modified spanning tree construction yields a temporally connected graph.
\begin{lemma} \label{lem:spanning-tree-temporally-connected}
    Any bidirected tree with root $r$, whose labels are assigned as in \Cref{def:reachability-dilation} (starting at the leafs, assign distinct, increasing labels to the upwards edge until reaching the root -- then starting at the root, assign further increasing labels to the downwards edges until reaching every leaf) is temporally connected. 
\end{lemma}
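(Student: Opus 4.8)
The plan is to show that any ordered pair of vertices $u,v$ in the bidirected tree is connected by a strict temporal path, distinguishing cases by how $u$ and $v$ relate to the root $r$. Recall the labeling: the ``upward'' edges (oriented from a child toward its parent) receive distinct increasing labels $1,2,\dots$ processed from the leaves up to the root, and then the ``downward'' edges (oriented from a parent toward its child) receive further distinct increasing labels, continuing from where the upward labels stopped, processed from the root down to the leaves. Thus every upward-oriented edge carries a smaller label than every downward-oriented edge, and along any root-to-leaf path the downward labels are increasing, while along any leaf-to-root path the upward labels are increasing.

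The key structural observation I would isolate first: for any vertex $u$, following the upward-oriented edges from $u$ to $r$ gives a strictly increasing sequence of labels (so $u$ reaches $r$), and following the downward-oriented edges from $r$ to any vertex $v$ also gives a strictly increasing sequence of labels (so $r$ reaches $v$). For the upward claim I would argue that if $e$ is the edge from a vertex $w$ to its parent and $e'$ the edge from that parent to its grandparent, then $e$ is processed before $e'$ in the leaves-to-root sweep (since $w$'s subtree is processed before the parent's edge), hence $\lambda(e)<\lambda(e')$. The downward claim is symmetric, using that in the root-to-leaves sweep a parent's outgoing edge is labeled before the child's outgoing edge. Then, for a general pair $(u,v)$, let $a$ be the lowest common ancestor of $u$ and $v$. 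The path from $u$ up to $a$ uses only upward edges (increasing labels), the path from $a$ down to $v$ uses only downward edges (increasing labels), and since every upward label is smaller than every downward label, the concatenation is strictly increasing. Hence $u \rightsquigarrow v$ strictly, so the graph is temporally connected.

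I should also handle the degenerate sub-cases cleanly: if $u = a$ the path is purely downward, if $v = a$ it is purely upward, and if $u$ lies on the $r$--$v$ branch or vice versa these are exactly the two observations above. A short remark that a single vertex trivially reaches itself closes the last case.

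The main obstacle is purely bookkeeping: making precise the claim that ``subtrees are processed before their parent edges'' in the leaf-to-root sweep (and the mirror statement for the root-to-leaf sweep), so that comparisons of labels along a branch come out in the right direction regardless of the branching order chosen during the sweep. This is not deep, but it is the place where the argument could go wrong if the labeling order in \Cref{def:reachability-dilation} is read too loosely; I would phrase it as: whenever edge $e$ is a descendant-edge of edge $e'$ in the tree, the sweep that orients both of them toward/away from $r$ assigns $e$ its label before $e'$ in the upward sweep and $e'$ before $e$ in the downward sweep, which is exactly what the recursive description guarantees.
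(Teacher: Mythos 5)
Your proof is correct and follows essentially the same argument as the paper: increasing labels along upward and downward branches, plus the fact that every upward label is smaller than every downward label. The only cosmetic difference is that you route each pair through its lowest common ancestor, whereas the paper routes every path through the root $r$ as a pivot vertex; both rest on exactly the same observations.
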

\begin{proof}
    By construction, the upwards directed edges form temporal paths from the leafs to the root $r$, and the downwards directed edges form temporal paths from $r$ to the leafs. Since all upwards paths are assigned time labels strictly smaller than all labels of the downwards paths, $r$ is a pivot vertex, \ie there is a time step $t$ such that every vertex of the tree reaches $r$ before $t$ and can be reached by $r$ after $t$. Thus, the tree is temporally connected. 
\end{proof}
\begin{theorem} \label{thm:reachability-dilation}
    Given a \setting{\D \& non-strict} temporal graph \gcal, \emph{reachability-dilation} transforms \gcal into a \setting{\D \& proper} temporal graph \hcal with $\rcal(\gcal)=\rcal(\hcal)$.
\end{theorem}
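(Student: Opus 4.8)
The plan is to prove two directions separately: every reachability in \gcal is preserved in \hcal, and no new reachability is introduced. Since reachability-dilation processes each snapshot independently and shifts later snapshots intact, it suffices to understand what happens within a single snapshot $G_t$ and then argue that gluing the snapshots in temporal order behaves as expected.

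First I would show \emph{preservation}: $\rcal(\gcal)\subseteq\rcal(\hcal)$. Within a weakly connected component of $G_t$, represented by the DAG $D=(\mathbb{S},E)$ with topological order $\ell$, a non-strict journey visits a sequence of strongly connected components $S_{i_1},\dots,S_{i_m}$ with $\ell(S_{i_1})<\dots<\ell(S_{i_m})$, possibly lingering inside some of them. By \Cref{lem:spanning-tree-temporally-connected}, after the replacement each $S$ becomes temporally connected via its bidirected spanning tree, so any pair of vertices of $S$ that the original journey connected inside $S$ is still connected inside the new gadget; moreover the gadget's labels all lie in the block $(\alpha,\alpha+k_S]$ reserved for $S$ by the ordering $\ell$ (with later blocks shifted to make room), and the inter-component edges keep distinct, increasing labels consistent with $\ell$. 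Concatenating these pieces in the order of $\ell$ yields a strict temporal path in \hcal with the same endpoints. The same concatenation argument extends across consecutive snapshots, since the snapshots keep their temporal order after shifting.

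Next I would show \emph{soundness}: $\rcal(\hcal)\subseteq\rcal(\gcal)$. Here the key observation is that every temporal edge of \hcal lies inside exactly one weakly connected component of exactly one snapshot $G_t$ of \gcal, and within that snapshot it either sits inside the gadget replacing some $S\in\mathbb{S}$ or is an inter-component edge of the DAG $D$. So a temporal path in \hcal decomposes into maximal blocks, each living in one snapshot's one weakly connected component; within such a block, the labels respect the order $\ell$ on $D$ and the intra-$S$ labels, so the block only ever moves ``forward'' along $D$ — it cannot traverse a pair of vertices not already connected by a (non-strict) walk in $G_t$, because the bidirected spanning tree of $S$ connects only vertices of $S$, which are mutually reachable in $G_t$, and the inter-component edges of $D$ are edges (hence reachabilities) present in $G_t$. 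Chaining the blocks across snapshots in increasing time, each block is realizable as a non-strict journey in the corresponding snapshot of \gcal, so their concatenation is a non-strict journey in \gcal with the same endpoints.

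The main obstacle I anticipate is the bookkeeping in the soundness direction: one must be careful that a temporal path in \hcal cannot ``cheat'' by entering a gadget for $S$, leaving it for a later component of $D$, and somehow coming back, or by exploiting the $\varepsilon$-free single labels to interleave gadgets in a way that has no analogue in \gcal. The resolution is that the label blocks assigned to the strongly connected components are disjoint and ordered by $\ell$, so once a path leaves the block of $S$ it can never return to a label in that block; combined with the fact that each $S$'s gadget only connects vertices mutually reachable within $G_t$, this rules out spurious reachabilities. I would state this monotonicity of label blocks explicitly as the crux of the argument and then the two inclusions follow cleanly.
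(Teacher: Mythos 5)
Your proof is correct in substance and follows essentially the same route as the paper: process each snapshot independently, decompose each weakly connected component into its condensation DAG of strongly connected components, use \Cref{lem:spanning-tree-temporally-connected} to get full reachability inside each replaced component, and rely on the preserved ordering of components and of snapshots to stitch journeys together. If anything, you are more explicit than the paper about the reverse inclusion $\rcal(\hcal)\subseteq\rcal(\gcal)$: the paper simply asserts that reachability inside each component and across the DAG ``is the same,'' whereas your block-decomposition argument (disjoint, $\ell$-ordered label ranges, so a path can never re-enter an earlier component's range) spells out why no spurious reachability can arise; this is a welcome sharpening, not a deviation. Two small points: the theorem also asserts that \hcal is \setting{proper}, which your proposal never addresses (the paper dispatches it in one line via the argument of \Cref{thm:support-dilation} plus the properness of the tree labeling), so a complete write-up should add that sentence; and your description of the intra-component labels as filling a block $(\alpha,\alpha+k_S]$ is borrowed from support-dilation rather than reachability-dilation, where the gadget uses the distinct single labels of the bidirected spanning tree — the disjointness and ordering of the blocks, which is all your argument uses, still holds.
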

\begin{proof}
    (\hcal is proper.) This follows by the same arguments as in \Cref{thm:support-dilation} and the fact that the spanning tree labeling is proper.\\
    ($\rcal(\gcal)=\rcal(\hcal)$.) 
    Given a weakly connected component in a snapshot $G_t$ of \gcal considered independently, let $D=(\mathbb{S},E)$ be the DAG representing it. Let $S\in\mathbb{S}$ be a strongly connected component in $D$. 
    By \Cref{lem:spanning-tree-temporally-connected}, the edges in $S$ are replaced by a proper, temporally connected bidirected tree spanning all vertices of $S$. Thus, all vertices in $S$ can reach one another in \hcal within the extended time interval of $S$, so the reachability inside $S$ in \hcal is the same as in \gcal. 
    Now, since the chronological order of the strongly connected components in any DAG is maintained and the footprint of the DAG is the same, temporal reachability over multiple strongly connected components in one snapshot is the same in \hcal and \gcal. Lastly, since the chronological order of the snapshots is also maintained, the temporal reachability via multiple snapshots is also equivalent.
\end{proof}
Reachability-dilation can also be executed on undirected graphs, where it replaces each connected component of every snapshot with an undirected spanning tree. This process will be discussed in more detail in \Cref{sec:directed-VS-undirected}, \Cref{lem:yesTransform-UD_nonstrict_simple-D_proper_simple}.

\subsubsection{Saturation: \yesTransform{\D \& *} {R}{\D \& strict \& simple}} \label{subsec:saturation}
Undirected saturation is a reachability-preserving transformation from \setting{\UD \& non-strict} to \setting{\UD \& strict} temporal graphs.
While a transformation under reachability equivalence is already guaranteed by support-dilation (as \setting{\UD \& proper} is a subset of \setting{\UD \& strict}), saturation offers a simpler process with the key advantage of maintaining the same lifetime. This simplicity comes at the cost of increasing the number of edges, potentially up to $\frac{n(n-1)\tau}{2}$. 
%
The process considers each snapshot $G_1,\dots,G_\tau$ individually. For each $G_i$, it constructs the path-based transitive closure, replacing every path in $G_i$ with an edge with time label $i$. Since each connected component in a snapshot forms a clique in $\rcal(\gcal)$, that undirected edge with label $i$ does not form new reachabilities.
This simplifies the graph's structure but results a multi-labeled graph, even for small instances.

For directed graphs, an even simpler reachability-preserving transformation achieves a mapping from \textbf{any} directed setting to \setting{\D \& strict \& simple}:
Take the reachability graph as the footprint and assign a uniform time label (e.\,g., 1) to all edges.
This replaces every possible temporal path in the graph with a direct edge with label $1$, which, because of the \setting{strict} setting, yields the desired reachabilities.
It is important to note that this directed transformation does not work for undirected graphs, since reachability is not symmetric and directed edges are crucial for this construction. 
\begin{observation} [\yesTransform{\D \& *}{R}{\D \& strict \& simple}] \label{thm:saturation}
    Given a directed temporal graph \gcal in an arbitrary setting,
    $\hcal$ defined as $(\rcal(\gcal),\lambda)$ with $\lambda(e)=1$ for all $e\in E(\rcal(\gcal))$ 
    is a \setting{\D \& strict \& simple} temporal graph 
    with $\rcal(\gcal)=\rcal(\hcal)$.
\end{observation}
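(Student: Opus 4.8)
The plan is to reduce the statement to an equality of edge sets and then verify two inclusions. Observe first that $\hcal$ is defined on the vertex set of $\rcal(\gcal)$, which is precisely the vertex set $V$ of $\gcal$; hence $\gcal$, $\rcal(\gcal)$, and $\rcal(\hcal)$ all carry the same vertices, so $\rcal(\gcal)=\rcal(\hcal)$ is literal equality and it suffices to prove $E(\rcal(\gcal))=E(\rcal(\hcal))$. Before that I would record that $\hcal$ indeed lies in the \setting{\D \& strict \& simple} class: its footprint $\rcal(\gcal)$ is a directed graph, every edge receives exactly one label (namely $1$) so $\hcal$ is simple, and reachability in $\hcal$ is evaluated with strict journeys; properness is neither claimed nor required.

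First I would show $E(\rcal(\gcal))\subseteq E(\rcal(\hcal))$: if $(u,v)\in E(\rcal(\gcal))$ then $(u,v)$ is an edge of the footprint of $\hcal$ with label $1$, and the single temporal edge $\tuple{(uv,1)}$ is a (vacuously strict) temporal path from $u$ to $v$ in $\hcal$, so $(u,v)\in E(\rcal(\hcal))$. For the converse $E(\rcal(\hcal))\subseteq E(\rcal(\gcal))$ I would invoke the one structural fact that matters: every temporal edge of $\hcal$ carries label $1$, and a strict temporal path must have strictly increasing labels, so it consists of at most one temporal edge. Hence every temporal path between distinct vertices of $\hcal$ is a single footprint edge, which gives that $E(\rcal(\hcal))$ equals the edge set of the footprint of $\hcal$, \ie $E(\rcal(\gcal))$ (reachability graphs being taken, as is standard and consistent with the definition of temporal connectivity, without self-loops). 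Combining the two inclusions yields $\rcal(\hcal)=\rcal(\gcal)$.

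I do not expect a genuine obstacle; the content is the single observation that, in the strict setting, a uniform labeling makes the footprint coincide with the reachability graph. The step worth emphasizing is where strictness is used — it is exactly what prevents chaining the label-$1$ edges into longer journeys. This is also why the same recipe fails for \setting{\D \& non-strict} graphs, where label-$1$ edges could be concatenated and thereby reintroduce transitive reachabilities (such as those ruled out in \Cref{lem:nonstrict-triangles}), and for undirected graphs, where chaining would in addition force symmetric reachability. I would close by noting that the construction is oblivious to the original setting of $\gcal$: that setting only determines which static graph $\rcal(\gcal)$ is, and once this graph is fixed the argument above applies verbatim — which is what lets the transformation start from \emph{any} directed setting.
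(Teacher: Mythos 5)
Your proposal is correct and follows essentially the same route as the paper: take $\rcal(\gcal)$ as the footprint with a uniform label, note that strictness forbids concatenating equal-label edges so every temporal path in $\hcal$ is a single edge, and conclude that $\rcal(\hcal)$ coincides with its footprint $\rcal(\gcal)$. Your closing remarks on why the construction fails for non-strict and undirected settings also mirror the paper's accompanying discussion.
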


\subsubsection{Directed Semaphore: \yesTransform{\D \& strict}{iR}{\D \& simple \& proper}} \label{subsubsec:semaphore}
    The (undirected) semaphore technique transforms \setting{\UD \& strict} temporal graphs into \setting{\UD \& simple \& proper} graphs while preserving induced-reachability. This means it maintains the reachability between original vertices, but while introducing new vertices.
    Each undirected edge $(uv,t)$ is replaced by a ``semaphore'' gadget, which simulates the two directions of the edge separately.
    The gadget consists of two auxiliary vertices, $\overrightarrow{uv}$ and $\overrightarrow{vu}$, and four edges: $(u,\overrightarrow{uv}, t-\varepsilon)$, $(\overrightarrow{uv},v, t+\varepsilon)$, and $(v,\overrightarrow{vu}, t-\varepsilon)$, $(\overrightarrow{vu},u, t+\varepsilon)$.
    This construction guarantees a simple labeling where each gadget can be traversed in only one direction. To ensure properness of the graph, CCS used an edge coloring as in the undirected dilation. 

    The generalization to directed graphs is straightforward. Since each original edge has a direction, the construction only needs to subdivide the edge along its direction, without requiring duplication.
    Unlike the undirected construction, where each pair of vertices ends up with either zero or an even number of opposing subdivided edges (two per original edge), the directed construction may result in an odd number of such edges (one subdivided edge per original edge).
    The generalized process directly follows \cite[Theorem 2]{casteigts_simple_2024}, and we include it here for completeness. For simplicity, the construction uses fractional time labels, which can later be renormalized to integers.
\begin{definition}[Semaphore (directed)] \label{def:semaphore}
    Let \gcal be a directed temporal graph with temporal edges \ecal, let $\Delta$ be the maximum degree of the footprint $G$, and let $0 < \varepsilon < \frac{1}{2(\Delta+1)}$. 
    Consider an edge-coloring $c$ of $G$ using $\Delta+1$ colors (which exists by Vizing's theorem).
    
    Replace every directed temporal edge $e=(u,v,t)\in\ecal$ with an auxiliary vertex $\overrightarrow{uv}$ and two temporal edges $(u,\overrightarrow{uv}, t-c(e)\cdot\varepsilon)$ and $(\overrightarrow{uv},v, t+c(e)\cdot\varepsilon)$.
\end{definition}
\begin{theorem}[\yesTransform{\D \& strict}{iR}{\D \& simple \& proper}] \label{thm:semaphore}
    Given a \setting{\D \& strict} temporal graph \gcal, semaphore transforms \gcal into a \setting{\D \& simple \& proper} temporal graph \hcal such that there is $\sigma\colon V(\gcal)\rightarrow V(\hcal)$ with $(u,v)\in\rcal(\gcal)$ if and only if $(\sigma(u),\sigma(v))\in\rcal(\hcal)$.
\end{theorem}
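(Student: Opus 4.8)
The plan is to verify the two properties of \hcal asserted in the statement\,—\,that \hcal is \setting{\D \& simple \& proper}, and that the semaphore construction preserves reachability between the original vertices\,—\,following the blueprint of \cite[Theorem 2]{casteigts_simple_2024} but without the edge-duplication step, which the directed setting renders unnecessary.

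\textbf{Simple and proper.} First I would observe that \hcal is simple by construction: each directed temporal edge $e=(u,v,t)$ of \gcal is replaced by two edges $(u,\overrightarrow{uv},t-c(e)\varepsilon)$ and $(\overrightarrow{uv},v,t+c(e)\varepsilon)$ on the fresh static edges $u\overrightarrow{uv}$ and $\overrightarrow{uv}v$; since \gcal is simple, each original edge $uv$ contributes at most one temporal edge, and since each auxiliary vertex $\overrightarrow{uv}$ is created exactly once, no static edge of \hcal receives more than one label. For properness, I would examine the edges incident to a vertex. The auxiliary vertices have degree two, with their single in-edge labelled $t-c(e)\varepsilon$ and single out-edge labelled $t+c(e)\varepsilon$, which differ because $\varepsilon>0$. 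For an original vertex $w$, every incident temporal edge comes from some original temporal edge $e$ incident to $w$ in \gcal and carries label $t_e\pm c(e)\varepsilon$. Two such labels coincide only if $t_e\pm c(e)\varepsilon = t_{e'}\pm c(e')\varepsilon$ for incident edges $e\ne e'$; since $e,e'$ share the endpoint $w$ they are adjacent in $G$, so $c(e)\ne c(e')$, and the choice $0<\varepsilon<\frac{1}{2(\Delta+1)}$ (so that $|c(e)-c(e')|\varepsilon < 1$) forces $t_e = t_{e'}$ and then the fractional parts to be equal\,—\,a contradiction with $c(e)\ne c(e')$ unless the signs differ, which I would rule out by noting that a $+$ and a $-$ incident at $w$ would need $t_e - t_{e'} = (c(e)+c(e'))\varepsilon$, again impossible for integers $t_e,t_{e'}$ with the same reasoning since $0 < (c(e)+c(e'))\varepsilon < 1$. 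Hence \hcal is proper.

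\textbf{Reachability preservation.} I would take $\sigma$ to be the identity embedding $V(\gcal)\hookrightarrow V(\hcal)$. For the forward direction, a strict temporal path $u = w_0 \rightsquigarrow w_1 \rightsquigarrow \dots \rightsquigarrow w_k = v$ in \gcal using temporal edges $(w_{i-1},w_i,t_i)$ with $t_1 < t_2 < \dots < t_k$ lifts to the path in \hcal that threads through the gadgets $w_{i-1} \to \overrightarrow{w_{i-1}w_i} \to w_i$; its labels are $t_1 - c_1\varepsilon,\ t_1 + c_1\varepsilon,\ t_2 - c_2\varepsilon,\ t_2 + c_2\varepsilon,\ \dots$, and this sequence is strictly increasing because within a gadget $t_i - c_i\varepsilon < t_i + c_i\varepsilon$ and between consecutive gadgets $t_i + c_i\varepsilon < t_{i+1} - c_{i+1}\varepsilon$, the latter holding since $t_i < t_{i+1}$ are integers and both perturbations are smaller than $\tfrac12$. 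Conversely, any temporal path in \hcal between two original vertices must enter and leave each auxiliary vertex it visits (auxiliary vertices have in-degree and out-degree one), so it decomposes into full gadget traversals $w_{i-1} \to \overrightarrow{w_{i-1}w_i} \to w_i$; projecting out the auxiliary vertices yields a walk in \gcal over the edges $(w_{i-1},w_i,t_i)$, and since the \hcal-path is strict, consecutive gadget pairs satisfy $t_i + c_i\varepsilon < t_{i+1} - c_{i+1}\varepsilon$, which by the integrality argument gives $t_i < t_{i+1}$, so the projection is a strict temporal path in \gcal. Therefore $(u,v)\in\rcal(\gcal)$ if and only if $(\sigma(u),\sigma(v))\in\rcal(\hcal)$, i.e. $\rcal(\gcal) \simeq \rcal(\hcal)[\sigma(V(\gcal))]$.

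The main obstacle I anticipate is purely bookkeeping rather than conceptual: carefully handling the $\pm c(e)\varepsilon$ perturbations so that (i) no two labels at a common vertex collide (properness) and (ii) the strict ordering of integer labels along a path is neither created nor destroyed by the perturbations (reachability). Both reduce to the single quantitative fact that any relevant combination of color differences times $\varepsilon$ lies strictly between $-1$ and $1$, which is exactly what the bound $\varepsilon < \frac{1}{2(\Delta+1)}$ buys; once that is stated cleanly, the rest is routine. A minor additional point worth making explicit is that the construction can use fractional labels freely and then be renormalized to integers by a monotone relabeling of the (finitely many) distinct time values, which preserves both strictness and properness.
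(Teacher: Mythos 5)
Your overall route---verifying \emph{simple}, \emph{proper}, and then lifting/projecting paths through the gadgets with the ``integer gaps beat the $\pm c(e)\varepsilon$ perturbations'' bookkeeping---is exactly the argument the paper intends (it defers to \cite[Theorem 2]{casteigts_simple_2024}), and your reachability part is sound. However, there is one genuine gap: you assume \gcal is simple. The theorem is stated for \setting{\D \& strict}, which is multi-labeled by default, and this case matters (the corollary that all directed settings are induced-reachability equivalent applies semaphore to the multi-labeled output of dilation). With the construction as you describe it---one auxiliary vertex $\overrightarrow{uv}$ per static arc---an arc of \gcal carrying two labels $t_1\neq t_2$ produces static edges $u\overrightarrow{uv}$ and $\overrightarrow{uv}v$ each with two labels, so \hcal is \emph{not} simple; your sentence ``since \gcal is simple, each original edge contributes at most one temporal edge'' silently restricts the statement. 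The intended construction (explicit in the paper's description of the undirected semaphore, where the auxiliary vertex $x_t$ is unique for each \emph{temporal} edge) introduces a fresh auxiliary vertex per label, not per arc; with that reading simplicity is immediate for arbitrary multi-labeled \gcal, and your properness case analysis needs one additional case: two temporal edges on the same arc have equal colors but integer labels differing by at least $1$, which dominates perturbations of magnitude $<\tfrac12$. The lifting/projection argument is unaffected by this change, since each gadget still consists of one in-edge and one out-edge at its auxiliary vertex.

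A smaller point: at an auxiliary vertex you claim the labels $t-c(e)\varepsilon$ and $t+c(e)\varepsilon$ differ ``because $\varepsilon>0$''; this needs $c(e)>0$, so the $\Delta+1$ colors should be taken from $\{1,\dots,\Delta+1\}$ (the bound $\varepsilon<\frac{1}{2(\Delta+1)}$ still keeps every perturbation strictly below $\tfrac12$, so your inequalities go through). If color $0$ were allowed, a gadget's two edges would share the label $t$ and properness would fail at that auxiliary vertex. This is an ambiguity inherited from the paper's definition, but your proof relies on resolving it, so it should be stated explicitly.
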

The proof of \Cref{thm:semaphore} is almost identical to that of \cite[Theorem 2]{casteigts_simple_2024}, so we give only a brief intuition: \hcal is obviously simple, and proper by the same reasoning as for \Cref{thm:support-dilation}.
The reachability equivalence for $V(\gcal)$ follows from the construction: Each directed edge $(u,v,t)$ is subdivided into two parts. The first half is assigned a label less than $t$, and the second half is assigned a label greater than $t$ (shifted according to the edge coloring).
As a result, after traversing from $u$ to $v$ via the subdivided edge, any temporal path reaches $v$ at a time later than any first-half label of any original edge $(v,x,t)$. This ensures that a temporal path can take at most one of the original edges at each time step, and thus is strict.

As for undirected graphs, one can apply dilation and semaphore, to transform any directed setting into \setting{\D \& simple \& proper} with preserved induced-reachability. Since \setting{\D \& simple \& proper} is subset of every setting, we conclude the following.
\begin{corollary}[\setting{\D \& *} $\leftrightsquigarrow^\text{iR}$ \setting{\D \& simple \& proper}] \label{thm:iR-happy-all}
    All directed settings are induced-reachability equivalent.
\end{corollary}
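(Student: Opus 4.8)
The plan is to assemble \Cref{thm:iR-happy-all} directly from the transformations already established, using the transitivity of induced-reachability equivalence together with the subset relations among the directed settings. The target setting \setting{\D \& proper \& simple} is, by the definitions in \Cref{subsec:settings} and \Cref{fig:temporal-settingstwo}, a subset of every other directed setting, so it suffices to show that each of the six directed settings admits an induced-reachability-preserving transformation \emph{into} \setting{\D \& proper \& simple}; the reverse direction is then the trivial (identity) subset-transformation, which is bijective and hence also induced-reachability-preserving. Since induced-reachability equivalence is the weakest of the four notions (bijective $\Rightarrow$ support $\Rightarrow$ reachability $\Rightarrow$ induced-reachability, as noted after the definitions), any transformation under a stronger notion counts.

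The first step is to reduce all six directed settings to \setting{\D \& strict}. The settings \setting{\D \& strict}, \setting{\D \& strict \& simple}, \setting{\D \& proper}, \setting{\D \& proper \& simple}, \setting{\D \& non-strict}, and \setting{\D \& non-strict \& simple} all sit below \setting{\D \& non-strict} or inside \setting{\D \& strict}; concretely, \setting{\D \& proper} (and its simple restriction) and \setting{\D \& non-strict \& simple} are subsets of \setting{\D \& non-strict}, and \setting{\D \& strict \& simple} is a subset of \setting{\D \& strict}. By \Cref{thm:support-dilation}, support-dilation maps any \setting{\D \& non-strict} graph to a support-equivalent \setting{\D \& proper} graph, and \setting{\D \& proper} is in turn a subset of \setting{\D \& strict}. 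Hence every directed setting can be transformed, preserving at least support (and therefore induced-reachability), into a graph in \setting{\D \& strict}. The second step is then to apply \Cref{thm:semaphore}: the directed semaphore construction transforms any \setting{\D \& strict} graph into a \setting{\D \& simple \& proper} graph that is induced-reachability equivalent to it. Composing the two transformations and invoking transitivity of induced-reachability equivalence gives, for every directed setting $\mathbb{G}$, a transformation \yesTransform{$\mathbb{G}$}{iR}{\D \& proper \& simple}. Combined with the trivial subset transformation in the other direction, this yields \setting{$\mathbb{G}$} $\leftrightsquigarrow^{\text{iR}}$ \setting{\D \& proper \& simple} for each of the six settings, and since induced-reachability equivalence is an equivalence relation, all directed settings are mutually induced-reachability equivalent.

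There is essentially no hard obstacle here: the statement is a corollary that stitches together \Cref{thm:support-dilation} and \Cref{thm:semaphore}. The only point requiring a little care is the bookkeeping of which notion each arrow preserves and confirming that induced-reachability equivalence is genuinely transitive and compatible with vertex-set growth — when support-dilation is applied first it keeps the vertex set fixed, and then semaphore enlarges it, so the composed map still matches the induced-reachability definition (one tracks $V_1 \subseteq V_2$ via the injection $\sigma$ from \Cref{thm:semaphore} composed with the identity on the dilated graph's original vertices). One should also note explicitly that \setting{\D \& proper \& simple} being a subset of all other directed settings is exactly what makes the equivalence two-sided; this is the observation already recorded just before the corollary ("Since \setting{\D \& simple \& proper} is subset of every setting"). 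No separating structures or lower bounds are needed, so the proof is short.
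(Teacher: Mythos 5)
Your proposal is correct and follows essentially the same route as the paper: compose (support-)dilation with the directed semaphore construction to land in \setting{\D \& proper \& simple}, and use the fact that this setting is a subset of every other directed setting for the reverse direction. The paper states this composition in one sentence immediately before the corollary; your extra bookkeeping about transitivity and vertex-set growth is consistent with it (and you could even shortcut the first step via the saturation observation, \Cref{thm:saturation}, since \setting{\D \& strict \& simple} is also a subset of \setting{\D \& strict}).
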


\section{Comparison of Directed and Undirected Reachability} \label{sec:directed-VS-undirected}
In this section, we aim to merge the undirected and directed reachability hierarchy. Refer back to \Cref{fig:merged-hierarchy1} in the introduction for an illustration of the hierarchy. 

First, observe that directed graphs are inherently more expressive than undirected graphs, as any undirected graph can be transformed into a directed graph, but not vice versa: One simple directed edge cannot be expressed in an undirected setting under any equivalence stronger than induced-reachability. On the other hand, we can replace an undirected edge by two directed edges obtaining a bijective-equivalent directed graph, see \Cref{fig:directed-no-undirected--undirected-yes-directed} for an illustration.
\begin{figure}[ht]
    \centering
    \includegraphics[width=0.2\linewidth]{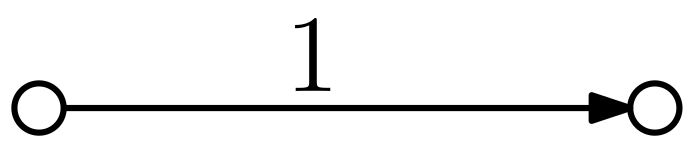} \hspace{5em}
    \includegraphics[width=0.35\linewidth]{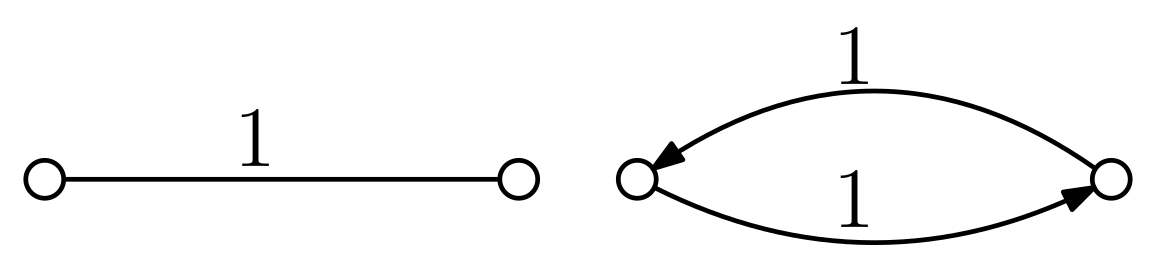}
    \caption{A separating structure from any directed setting to any undirected setting (left), and an intuition for a transformation from undirected to directed (right).}
    \label{fig:directed-no-undirected--undirected-yes-directed}
\end{figure}
This may imply that any undirected setting \setting{\UD \& x} can be reachability-preserving transformed into its directed equivalent \setting{\D \& x} using this straightforward transformation.

For both strict and non-strict settings, this approach works seamlessly, and the resulting directed graph is simple if and only if the original undirected graph was simple.
However, when transforming any undirected proper temporal graph in this way, the resulting directed graph is not proper under the standard definition. Specifically, the opposing edges $(u,v)$, $(v,u)$ derived from the same undirected edge $(uv,t)$ are assigned the same label $t$.
However, since these two edges cannot appear in the same path, their labels can be slightly adjusted (tilted) to make the graph proper without altering the reachability.
With this, we make the following first observation:
\begin{observation}
    It holds that \yesTransform{undirected \& x}{R}{directed \& x}, but the converse is not true \noTransform{directed \& x}{R}{undirected \& x}.
\end{observation}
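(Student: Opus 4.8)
The plan is to prove the two halves separately, each reusing structures already established in the paper.

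\textbf{Forward direction: \yesTransform{undirected \& x}{R}{directed \& x}.} First I would fix any undirected temporal graph $\gcal = (V,E,\lambda)$ in setting \setting{\UD \& x} and build $\hcal$ by replacing every undirected edge $uv \in E$ with the two opposing directed edges $(u,v)$ and $(v,u)$, each carrying the same label set $\lambda(uv)$. The core claim is that $\mathbb{P}(\gcal)$ and $\mathbb{P}(\hcal)$ are in support-preserving bijection, which immediately gives reachability equivalence (indeed bijective equivalence): a temporal path in $\gcal$ traversing $uv$ at time $t$ corresponds to the directed copy in the direction it is being traversed, and the (non-)decreasing-label condition is identical on both sides, so this works uniformly for the strict and non-strict cases. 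Second I would check the side conditions that $x$ might impose. Simplicity is preserved verbatim: if each undirected edge has one label, each of its two directed copies has one label. Properness is the only subtle point, since the two copies $(u,v)$ and $(v,u)$ share the endpoint-incidence structure and the common label $t$, violating the directed properness definition. Here I would invoke the ``tilting'' argument already sketched in the preceding paragraph: because $(u,v)$ and $(v,u)$ can never both appear on a single temporal path, one may shift the label on $(v,u)$ by a tiny $\varepsilon$ (or, after renormalizing, by $1$ on a stretched timeline) to separate it from $(u,v)$ and from any other edge incident to $u$ or $v$, exactly as in the edge-coloring trick used in \Cref{thm:support-dilation} and \Cref{def:semaphore}. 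One must verify this local perturbation changes no reachability — it cannot create a new path (the tilted edge still can't be combined with its partner) and cannot destroy one (the relative order with genuinely distinct labels is untouched for small enough $\varepsilon$). Since \setting{\UD \& proper} collapses the strict/non-strict distinction, handling \setting{proper} once suffices.

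\textbf{Reverse direction: \noTransform{directed \& x}{R}{undirected \& x}.} Here I would exhibit a single separating structure valid for every $x$: the two-vertex temporal graph $\gcal$ with one directed temporal edge $(u,v,1)$, whose reachability graph is the single directed arc $u \to v$ (plus isolated behaviour otherwise). I would then argue that no undirected temporal graph, in any setting, has this reachability graph: in an undirected graph, if $u$ can reach $v$ via a temporal path $P$, then the reverse sequence of edges with the same labels traversed in the opposite order is also a valid temporal path (since ``non-decreasing'' read backwards is ``non-increasing'', one must be slightly careful — but taking $P$ and reading it from $v$ gives a path whose labels are non-increasing, which is \emph{not} automatically valid). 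The cleaner formulation: in an undirected graph a single edge $uv$ with label $t$ yields both $u\to v$ and $v\to u$ in the reachability graph; more generally any undirected temporal path from $u$ to $v$ of length one is symmetric, and the reachability relation restricted to adjacent footprint vertices is symmetric. Since the target reachability graph has $u\to v$ but not $v\to u$ while $uv$ would have to be a footprint edge (the only way to realize $u\to v$ at all on two vertices), no undirected realization exists. This is precisely the ``one simple directed edge cannot be expressed in an undirected setting'' observation already made, and it holds under every equivalence stronger than induced-reachability, hence in particular under reachability equivalence for every $x$.

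\textbf{Main obstacle.} The forward direction's bijection and the reverse direction's separator are both essentially routine; the one genuinely fiddly point is making the properness perturbation rigorous in the \setting{proper} case — specifically, showing the $\varepsilon$-tilt of the back-edges can be done globally (all edges simultaneously) without two perturbed edges colliding and without any reachability changing. I expect to dispatch this by the same Vizing edge-coloring device used elsewhere in the paper: color the footprint of $\hcal$ with $\Delta+1$ colors and shift edge $e$'s label by $c(e)\cdot\varepsilon$ with $0<\varepsilon<\tfrac{1}{2(\Delta+1)}$, which guarantees all incident edges get distinct labels while keeping every label within $(t-\tfrac12, t+\tfrac12)$ of its original, so the global chronological order among originally-distinct labels is preserved and no new temporal path is created. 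That bookkeeping is the crux; everything else follows the templates already in \Cref{sec:directed-reachability}.
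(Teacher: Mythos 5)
Your proposal is correct and matches the paper's own argument: the forward direction uses the same doubling of each undirected edge into two opposing directed edges (bijective, hence reachability, equivalence), with the same $\varepsilon$-tilting fix for the \setting{proper} case that the paper sketches, and the reverse direction uses the same separator, a single directed temporal edge, whose asymmetric reachability no two-vertex undirected temporal graph can realize. The only difference is that you spell out the Vizing-coloring bookkeeping for the tilt slightly more explicitly than the paper does.
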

Note that in the direct translation from undirected proper to directed, there are only temporal paths with strictly increasing labels possible. This is the essential property of proper temporal graphs. 
Therefore, we want to propose a (bijective) equivalent and more convenient definition of \emph{proper} for directed temporal graphs. 
\begin{definition}
    A directed temporal graph \gcal is proper, if for all temporal edges $(u,v,t)$, the adjacent edges $(v,x,t')$ with $x\neq v$ have different time labels $t\neq t'$.
\end{definition}

Now, the question is whether we can transform an undirected setting into a directed setting ``a level lower''. For \setting{\UD \& strict} and \setting{\UD \& strict \& simple} we show that this is not possible. Namely, we present a structure separating \setting{\UD \& strict \& simple} from \setting{\D \& non-strict}, which effectively separates the strict settings from the non-strict settings.
\begin{lemma}[\noTransform{\UD \& strict \& simple}{R}{\D \& non-strict}]
    There is a graph in the \setting{\UD \& strict \& simple} setting such that there is no reachability equivalent graph in the \setting{\D \& non-strict} setting.
\end{lemma}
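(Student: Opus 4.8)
The plan is to exhibit a \setting{\UD \& strict \& simple} temporal graph whose reachability graph contains an induced directed cycle, and then to invoke \Cref{lem:nonstrict-triangles}, which forbids induced cycles in the reachability graph of \emph{any} \setting{\D \& non-strict} graph. The design insight is that a direct undirected edge between two vertices always yields a bidirectional edge in the reachability graph, so a ``raw'' triangle on $a,b,c$ would only give a bidirected $K_3$, not an induced cycle; hence one must subdivide every triangle edge. Concretely, let \gcal have as footprint the undirected $6$-cycle $a - x_1 - b - x_2 - c - x_3 - a$ with alternating labels $\lambda(ax_1)=\lambda(bx_2)=\lambda(cx_3)=1$ and $\lambda(x_1b)=\lambda(x_2c)=\lambda(x_3a)=2$. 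This \gcal is undirected and simple, and we use strict reachability, so it lies in \setting{\UD \& strict \& simple}; moreover the construction is invariant under the order-$3$ rotation $a\mapsto b\mapsto c\mapsto a$, $x_1\mapsto x_2\mapsto x_3\mapsto x_1$, so it suffices to analyse reachabilities out of $a$.

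First I would compute the strict reachabilities among $a,b,c$. The sequence $a, x_1, b$ carries labels $1 < 2$, so $a$ reaches $b$; by rotational symmetry $b$ reaches $c$ and $c$ reaches $a$. For the non-reachabilities, note that in a single $6$-cycle there are exactly two arcs between any two of $a,b,c$, and both arcs from $a$ to $c$ fail to be non-decreasing: the short arc $a, x_3, c$ carries labels $2, 1$, and the long arc $a, x_1, b, x_2, c$ carries labels $1, 2, 1, 2$. Hence $a$ does not reach $c$, and by symmetry $b$ does not reach $a$ and $c$ does not reach $b$. Therefore $\rcal(\gcal)[\{a,b,c\}]$ is exactly the directed triangle $a \to b \to c \to a$, an induced directed cycle of $\rcal(\gcal)$ (the helper vertices $x_1,x_2,x_3$ may reach some of $a,b,c$, but this is irrelevant, since the induced subgraph on $\{a,b,c\}$ only sees edges among those three).

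Finally, suppose for contradiction that some \setting{\D \& non-strict} temporal graph \hcal satisfies $\rcal(\hcal)\simeq\rcal(\gcal)$. Since a graph isomorphism preserves induced subgraphs, $\rcal(\hcal)$ would also contain an induced directed $3$-cycle, contradicting \Cref{lem:nonstrict-triangles} (which also covers multi-labeled graphs, as adding labels only adds reachabilities). Hence no \setting{\D \& non-strict} graph is reachability equivalent to \gcal, which proves the lemma. I would accompany this with a figure of \gcal and its reachability graph, mirroring the figures used for the other separating structures. There is no real obstacle here: the only computational part is the finite case distinction verifying the three non-reachabilities, which is short because the footprint is a single $6$-cycle; the one thing that requires thought is the gadget design (subdividing the triangle to turn a would-be bidirected clique into a genuine induced cycle), and that is already handled above.
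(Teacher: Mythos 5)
Your reduction to \Cref{lem:nonstrict-triangles} has a genuine gap, and in fact your graph \gcal is not a separating structure at all. The ``in particular'' conclusion of \Cref{lem:nonstrict-triangles} is only available when the arcs of the cycle are forced to be realized along direct footprint edges between the cycle vertices; if the reachabilities $a\to b$, $b\to c$, $c\to a$ can be routed through vertices outside $\{a,b,c\}$, the labelling argument of that lemma never gets started. This is exactly why the paper's proof first establishes that every arc of $\rcal(\gcal)$ must be a direct edge of the footprint of \hcal (for each arc $u\to v$ of its four-vertex example there is no $w$ with $u\to w$ and $w\to v$ in $\rcal(\gcal)$), and only then applies the cycle argument to the forced footprint cycle. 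In your graph this forcing fails: $x_1$ is a witness for $a\to b$ (both $a\to x_1$ and $x_1\to b$ lie in $\rcal(\gcal)$), and likewise $x_2,x_3$ witness the other two arcs, so the step you dismiss as ``irrelevant'' (the helpers' reachabilities) is precisely the decisive one.

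Worse, a reachability equivalent \setting{\D \& non-strict \& simple} graph on the same six vertices does exist: take the bidirected copy of your \gcal with the same labels, i.e.\ the twelve arcs $(a,x_1,1),(x_1,a,1),(x_1,b,2),(b,x_1,2),(b,x_2,1),(x_2,b,1),(x_2,c,2),(c,x_2,2),(c,x_3,1),(x_3,c,1),(x_3,a,2),(a,x_3,2)$. The label-$1$ arcs form the vertex-disjoint bidirected pairs $\{a,x_1\},\{b,x_2\},\{c,x_3\}$ and the label-$2$ arcs the disjoint pairs $\{x_1,b\},\{x_2,c\},\{x_3,a\}$, so non-strict chaining of equal labels never leaves a pair, and the only productive compositions are ``label $1$ then label $2$''\,---\,exactly the strict compositions available in \gcal. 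A short check gives, e.g., that $a$ reaches exactly $\{x_1,b,x_3\}$ in both graphs, and by the rotational symmetry the two reachability graphs coincide; note that this \hcal also shows that the reachability graph of a \setting{\D \& non-strict} graph \emph{can} contain an induced directed triangle on a subset of its vertices, so the lemma cannot be invoked the way you use it. Your instinct to subdivide the triangle avoids bidirected arcs among $a,b,c$, but the alternating $1,2$ labels make strict and non-strict chaining agree, which is fatal. The paper's structure instead exploits the one place where they disagree\,---\,equal labels block strict but not non-strict chaining\,---\,on a square whose reachability graph has no transitive witnesses, so that the direct edges are forced and the cycle argument applies.
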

\ifshort
\begin{figure}[ht]
    \centering
    \includegraphics[width=0.5\linewidth]{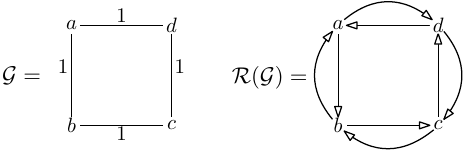}
    \caption{Temporal graph \gcal (left) in the \setting{non-strict \& simple} setting and the corresponding reachability graph (right). There is no support equivalent graph in the \setting{strict \& simple} setting.}
\end{figure}
\else
\begin{proof}
    Consider the following temporal graph \gcal in the \setting{\UD \& strict \& simple} setting (left) and the corresponding reachability graph (right). 
    \begin{figure}[ht]
        \centering
        \includegraphics[width=0.5\linewidth]{Figures/UD_strict_simple--no--D_happy.pdf}
    \end{figure}
    For the sake of contradiction, let \hcal be a temporal graph in the \setting{\D \& non-strict} setting whose reachability graph is isomorphic to that of \gcal.
    Note that in \gcal, $a$ reaches both $b$ and $d$, while $d$ does not reach $b$. Therefore, the direct edge $(a,b)$ must be included in \hcal to ensure that $a$ can reach $b$. By the same reasoning, every edge in $\rcal(\gcal)$ has to be included in \hcal.

    Consider the cycle $a,b,c,d,a$. Regardless of how the edges on this cycle are labeled, \Cref{lem:nonstrict-triangles} implies that this will create at least one transitive reachability. For example, the labeling $(a,b,4)$, $(b,c,3)$, $(c,d,2)$, $(d,a,1)$, results in $d$ reaching $b$.
    However, no such transitive reachability exists in $\rcal(\gcal)$.
    Therefore, there exists no valid labeling for \hcal.
\end{proof}
\fi
However, among the non-strict or proper settings, there exists indeed one such ``level-breaking'' transformation. Applying the reachability-dilation (\Cref{def:reachability-dilation}) to \setting{\UD \& non-strict \& simple} graphs, produces a reachability equivalent \setting{\D \& proper \& simple} graph.
Reachability dilation on an undirected graph replaces every connected component within each snapshot by a spanning tree with bidirected edges. The proof follows from the proof for directed reachability dilation (\Cref{thm:reachability-dilation}).
Refer to \Cref{fig:reachability-dilation-undirected_simple} for an illustration.
\begin{lemma}[\yesTransform{\UD \& non-strict \& simple}{R}{\D \& proper \& simple}]\label{lem:yesTransform-UD_nonstrict_simple-D_proper_simple}
    Given an \setting{\UD \& non-strict \& simple} temporal graph \gcal, \emph{reachability dilation} transforms \gcal into a \setting{\D \& proper \& simple} temporal graph \hcal with $\rcal(\gcal)=\rcal(\hcal)$.
\end{lemma}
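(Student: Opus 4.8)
The plan is to reduce \Cref{lem:yesTransform-UD_nonstrict_simple-D_proper_simple} to the already-established \Cref{thm:reachability-dilation} by viewing an undirected temporal graph as a special case of a directed one. First I would make precise the claim, implicit in the preamble to the lemma, that reachability-dilation applied to an undirected graph replaces every connected component of every snapshot with a bidirected spanning tree: an undirected connected component, viewed directionally, is a single strongly connected component (every undirected edge $uv$ at time $t$ is the pair of directed edges $(u,v,t)$ and $(v,u,t)$, so every vertex reaches every other within the snapshot), and hence the DAG $D=(\mathbb{S},E)$ representing it has a single node and no inter-component edges $E$. Thus the construction of \Cref{def:reachability-dilation} specializes exactly to: pick a root, assign distinct increasing labels to the upward directed edges of a spanning tree from leaves to root, then distinct increasing labels to the downward directed edges from root to leaves, and shift subsequent snapshots accordingly.

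The key steps are then as follows. (1) Observe that the directed graph obtained from \gcal by replacing each undirected edge with two opposing directed edges carrying the same label is a \setting{\D \& non-strict} graph, and in any snapshot each connected component becomes one strongly connected component, so \Cref{def:reachability-dilation} is well-defined on it and produces exactly the bidirected-spanning-tree construction described above. (2) Invoke \Cref{thm:reachability-dilation} verbatim: the output \hcal is \setting{\D \& proper} and $\rcal(\gcal)=\rcal(\hcal)$; here properness uses the same edge-coloring / $\varepsilon$-tilt argument as in \Cref{thm:support-dilation}, and the reachability equivalence uses \Cref{lem:spanning-tree-temporally-connected} inside each component plus the fact that the footprint DAG (a single node) and the chronological order of snapshots are preserved. (3) Argue that \hcal is \emph{simple}: every directed edge of \hcal lies in some bidirected spanning tree $T_S$ of some snapshot-component, and by construction each such directed edge receives exactly one label; distinct snapshots occupy disjoint shifted time windows, and within a window the upward pass and downward pass assign distinct labels and each directed edge is touched once, so no edge of \hcal ever gets two labels. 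Combined with (2) this gives \hcal in the \setting{\D \& proper \& simple} setting with the same reachability graph, which is the statement.

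I do not expect a serious obstacle here, since the heavy lifting is done by \Cref{thm:reachability-dilation} and \Cref{lem:spanning-tree-temporally-connected}; the only real content is the reduction and the bookkeeping. The mildest subtlety worth stating carefully is why applying the \emph{directed} reachability-dilation to the doubled graph is the same as "running reachability-dilation on the undirected graph" — i.e.\ that doubling each undirected edge does not change the snapshot's strongly connected structure and hence does not change the output beyond notation — and why the $\varepsilon$-tilts used to make the graph proper do not reintroduce multi-labeling (they shift existing single labels, not add new ones). If one prefers to avoid even mentioning the doubling, the proof can instead simply define undirected reachability-dilation directly by the bidirected-spanning-tree recipe and then cite that the proof of \Cref{thm:reachability-dilation} goes through unchanged, with the simplicity observation added as above; either phrasing is short.
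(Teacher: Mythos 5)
Your overall route is the paper's: view each snapshot's connected component as a single strongly connected component (a one-node DAG), defer properness and reachability equivalence to \Cref{thm:reachability-dilation} (via \Cref{lem:spanning-tree-temporally-connected}), and verify simplicity by hand. The gap is in your step (3): you never invoke the hypothesis that \gcal is simple, and the justification you give does not exclude the only genuine source of multi-labeling. Disjointness of the shifted time windows, and the fact that each directed edge is touched once \emph{within} a window, only show that the assigned labels are pairwise distinct; simplicity fails when one static directed edge of \hcal receives labels from \emph{different} snapshots. If an undirected edge $uv$ carried labels in two snapshots (which is exactly what ``simple'' forbids), the directed edge $(u,v)$ could be selected into the spanning trees of both components and would end up with two labels, so your claim that every directed edge of \hcal ``lies in some bidirected spanning tree of some snapshot-component and receives exactly one label'' silently assumes it lies in only one such tree. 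That assumption is precisely where the simplicity of \gcal is needed: each vertex pair is processed in at most one snapshot, hence each directed edge of \hcal gets at most one label --- the one sentence the paper's proof devotes to this point. Note that as written your argument would apply verbatim to multi-labeled \setting{\UD \& non-strict} graphs and would thereby settle Open Question~\ref{oq:proper}, which is a sign that a hypothesis has gone unused.

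Apart from this, the proof is fine and matches the paper's; one harmless inaccuracy is that reachability-dilation does not use the edge-coloring/$\varepsilon$-tilt mechanism of support-dilation: within a component the tree edges simply receive pairwise distinct labels, and properness follows from that together with the ordering and shifting between components and snapshots, exactly as in \Cref{thm:reachability-dilation}. The fix for the simplicity step is one sentence (cite simplicity of \gcal to conclude each pair of vertices is considered in at most one snapshot), after which your reduction-via-doubling phrasing and the paper's direct phrasing coincide.
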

\ifshort
\else
\begin{proof}
    (\hcal is proper.) This follows from the same arguments as in \Cref{thm:reachability-dilation}.\\
    (\hcal is simple.) Reachability-dilation replaces each connected component of a snapshot of \gcal with a spanning tree of the underlying footprint. Since the component is connected, such a tree always exists.
    By this process, each edge $uv$ within the component is either removed or replaced by two opposing directed temporal edges, each assigned one time label. As \gcal is simple, no pair of vertices will be considered more than once, and consequently, in \hcal, each pair of vertices is connected by either zero or two opposing directed, single labeled edges.\\
    ($\rcal(\gcal)=\rcal(\hcal)$.) This follows from the same arguments as in \Cref{thm:reachability-dilation}.
\end{proof}
\fi
\begin{figure}[h]
    \centering
    \includegraphics[width=0.9\linewidth]{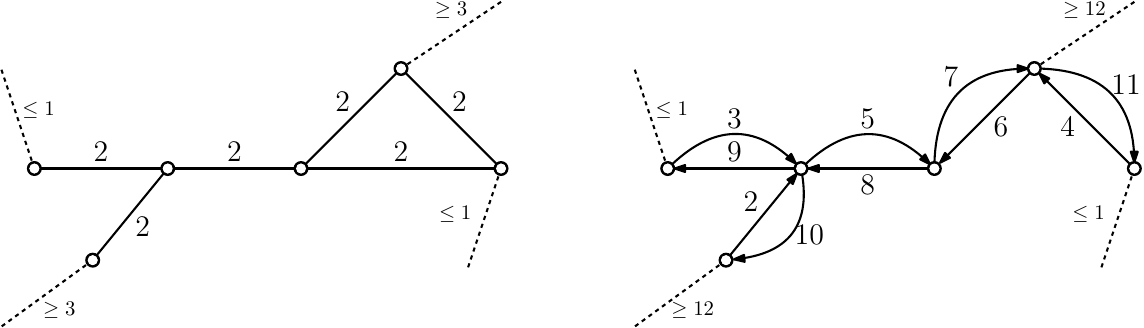}
    \caption{Illustration of the \textit{reachability-dilation} (\Cref{def:reachability-dilation}) applied to an \setting{\UD \& non-strict \& simple} graph \gcal with snapshot $G_2$ (left) and the proper labeling with the bidirected subtrees (right).
    The dotted edges represent adjacent earlier and later (shifted in the proper labeling) edges.}
    \label{fig:reachability-dilation-undirected_simple}
\end{figure}

There only remains one open question: whether \setting{\UD \& non-strict} can be transformed into \setting{\D \& proper \& simple}, into \setting{\D \& non-strict \& simple} or none of the two.
If that were the case, it would mean that when expressing undirected temporal graphs with directed temporal graphs, one requires not more than one label per edge. So to speak, one time label per direction of each edge.
One open question remains: whether \setting{\UD \& non-strict} graphs can be transformed into \setting{\D \& proper \& simple}, into \setting{\D \& non-strict \& simple}, or neither.
If such a transformation actually exists, it would imply that expressing undirected temporal graphs using directed temporal graphs requires no more than one time label per edge\,—\,essentially, one label per direction of each edge.

\section{Conclusion and Future Work} \label{sec:conclusion}
In this work, we extended the study by \cite{casteigts_simple_2024} from undirected to directed temporal graphs, analyzing how different temporal definitions impact the reachability structures that can be expressed.
We compared the temporal graph classes arising from these preliminary definitions under the equivalence notions introduced by \cite{casteigts_simple_2024}, which preserve all paths, preserve reachability, or preserve reachability when allowing auxiliary vertices.
Overall, we completed the undirected hierarchy (largely resolved by \cite{casteigts_simple_2024}), fully resolved the directed hierarchy, and nearly completed the merged hierarchy that compares the two.
Our results highlight both similarities and key differences between directed and undirected temporal graphs.

We highlight our results under reachability equivalence. We showed that \setting{directed \& strict \& simple} graphs, which are as expressive as \setting{directed \& strict} graphs, form the most expressive class: every reachability graph can be expressed using a \setting{directed \& strict \& simple} graph. At the other extreme, \setting{undirected \& proper \& simple} is the least expressive class.
Notably, any class of directed graphs is strictly more expressive than its undirected counterpart, while any class of strict graphs is strictly more expressive than any non-strict class. 

Our analysis leaves two questions open.
\begin{openquestion}[\yesTransform{UD \& non-strict}{R}{D \& proper \& simple}?] \label{oq:proper}
    Does there exist an \setting{undirected \& non-strict} graph such that there is no reachability equivalent \setting{directed \& proper \& simple} graph, or is there a reachability-preserving transformation?
\end{openquestion}
If this question is answered negatively, the following question remains to be answered:
\begin{openquestion}[\yesTransform{UD \& non-strict}{R}{D \& non-strict \& simple}?]\label{oq:nonstrict}
    Does there exist an \setting{undirected \& non-strict} graph such that there is no reachability equivalent \setting{directed \& non-strict \& simple} graph, or is there a reachability-preserving transformation?
\end{openquestion}
We have shown that every \setting{undirected \& \underline{strict}} (\setting{\& simple}) graph can be transformed into a \setting{directed \& strict \& simple} graph. This demonstrates that representing a \setting{strict} undirected graph with a directed graph while preserving reachability never requires more than one label per edge.
Now, if there exists a transformation for either \Cref{oq:proper} or \Cref{oq:nonstrict}, the same conclusion would extend to \setting{\underline{non-strict}} graphs.

\paragraph*{Future work}
We propose five directions for future research, building on the findings of this work and the contributions by \cite{casteigts_simple_2024}.
First, we want to mention questions 5 and 6 raised in \cite{casteigts_simple_2024} which dive deeper into the reachabilities each temporal graph setting can express. The first addresses computational complexity:
\begin{question}
    Given a static directed graph and a temporal graph class (setting), how computationally hard is it to decide whether the static graph can be realized as the reachability graph of a temporal graph in the given setting?
\end{question}
The second asks for a structural characterization of the reachability graphs that are realizable in each setting:
\begin{question}
    Given a temporal graph class, characterize the class of static directed graphs that can appear as the reachability graph of a temporal graph in the given class.
\end{question} 
This question is partially addressed by the separating structures presented in this work and in \cite{casteigts_simple_2024}, which identify subgraphs (e.\,g., the directed triangle) that are realizable in one setting but not another.

The next two question focus on extending the applicability of the current results. 
In this work, we primarily aimed to show the existence of transformations between settings. A natural next step is to investigate the efficiency of these transformations:
\begin{question}
    For the existing transformations, what constitutes an ``optimal'' transformed graph? For example, can we minimize the number of labels required? Are there identifiable bounds on the size of the resulting  transformed graph?
\end{question}
And lastly, we ask for a unified approach in future work:
\begin{question}
    What computational problems are independent under the existing transformations, respectively under the equivalence notions?
\end{question}
As an example, note that the computation of maximum flow in temporal graphs is independent under the semaphore transformation.
In a similar fashion, one can observe that many hardness results and algorithms are typically established for one specific temporal graph setting and then generalized to the other settings with minimal to no changes, e.\,g., shortest paths \cite{xuan_computing_2002}, connected components \cite{bhadra2003complexity}, exact edge-cover \cite{deligkas2024linespaintcityexact}, or Menger's theorem \cite{berman1996vulnerability}. This can be formalized by proving the independence of the computational problem under a fitting transformation or equivalence notion. 

\bibliography{other}

\appendix
\end{document}